\theoremstyle{plain}
\newtheorem{Theorem}{Theorem}
\newtheorem{Corollary}{Corollary}
\tikzset{style green/.style={
    set fill color=green!50!lime!60,
    set border color=white,
  },
  style cyan/.style={
    set fill color=cyan!90!blue!60,
    set border color=white,
  },
  style gray/.style={
    set fill color=gray!20,
    set border color=white,
  },
  style orange/.style={
    set fill color=orange!80!red!60,
    set border color=white,
  },
  hor/.style={
    above left offset={-0.15,0.31},
    below right offset={0.15,-0.125},
    #1
  },
  ver/.style={
    above left offset={-0.1,0.3},
    below right offset={0.15,-0.15},
    #1
  }
}
\newcommand{\verbatimfont}[1]{\def\verbatim@font{#1}}%
\newcommand{\bi}{\begin{itemize}}\newcommand{\ei}{\end{itemize}}
\newcommand{\be}{\begin{equation}}\newcommand{\ee}{\end{equation}}
\newcommand{\bee}{\begin{enumerate}}\newcommand{\eee}{\end{enumerate}}
\newcommand{\bea}{\begin{eqnarray}}\newcommand{\eea}{\end{eqnarray}}
\newcommand{\beas}{\begin{eqnarray*}}\newcommand{\eeas}{\end{eqnarray*}}
\newcommand{\bc}{\begin{center}}\newcommand{\ec}{\end{center}}
\title{Model Based Control of Soft Robots: \\ {\LARGE A Survey of the State of the Art and Open Challenges}}
\author{Cosimo Della Santina, Christian Duriez, Daniela Rus}
\newif\ifPDF \ifx\pdfoutput\undefined\PDFfalse \else\ifnum\pdfoutput > 0\PDFtrue \else\PDFfalse \fi \fi
\begin{document}
\maketitle

\begin{center}
	What follows is an \textbf{\color{red} old version} of the manuscript. You can find the \textbf{\color{blue} final one }at the following link:  \url{https://www.dropbox.com/scl/fi/ith1d3golzovg4tkusihg/Model-Based_Control_of_Soft_Robots_A_Survey_of_the_State_of_the_Art_and_Open_Challenges.pdf?rlkey=yvrf7ruvnr21i0e3h3m6wd0l3&dl=0}
	
	Feel free to get in touch via \texttt{c.dellasantina@tudelft.nl} if you cannot get access.
\end{center}

\CSMsetup

From a functional standpoint, classic robots are not at all similar to biological systems. If compared with rigid robots, animals' body looks overly redundant, imprecise, and weak. Nevertheless, animals can still perform a vast range of activities with unmatched effectiveness. Many studies in bio-mechanics have pointed to the elastic and compliant nature of the muscle-skeletal system as a fundamental ingredient explaining this gap.
Thus, to reach performance comparable to the natural ones, elastic elements have been introduced in rigid bodied robots leading to articulated soft robotics \cite{della2021soft}. In continuum soft robotics, this concept is brought to an extreme. Here, softness is not concentrated at the joint level but instead distributed across the whole structure. As a result, soft robots (from now on, we will omit the adjective \textit{continuum}) are entirely made of continuously deformable elements. This design solution aims to bring robots closer to invertebrate animals and soft appendices of vertebrate animals (e.g., an elephant's trunk, the tail of a monkey). Several soft robotic hardware platforms have been proposed, with increasingly higher reliability and functionalities. In this process, considerable attention has been devoted to the technological side of the problem, leading to a large assortment of hardware solutions. In turn, this abundance opened up to the challenge of developing effective control strategies that can manage the soft body and exploit its embodied intelligence.

Historically and across many application domains, model-based techniques are the first advanced control algorithms to appear and substitute heuristic rules. Data-driven and machine learning approaches usually come later when moving to more extreme control scenarios. This has also been the case for standard robotics, whose history proceeded parallel to the development of control theory: from the frequency domain to linear state space control, to fully nonlinear domain, and only recently to machine learning. Vice versa, the development of control algorithms in soft robotics has followed a reversed path. In the early days, machine learning strategies have been the way to control soft robots - except for the quasi-static and purely kinematic scenarios. Indeed, it has been long believed that model-based strategies were unfeasible for the soft robotic application due to the large variability of technological solutions and the overwhelming complexity of the modeling task.

Over the past few years, two main factors have been challenging this view.  First, theoretical and experimental investigations have shown that feedback schemes are robust to rough approximations of the soft robot dynamics. Interestingly, even vastly simplified descriptions already provide enough information to improve the performance significantly compared to the model-free baseline. Second, a new wave of finite-dimensional modeling techniques tailored to soft robots has appeared in the literature, which are simultaneously accurate, manageable, and interpretable.  Even if a complete application to closed-loop control has yet to come for some of these models, these theoretical works identify an underlying mathematical structure in soft robotics. Therefore, they lay a solid ground on which to study the control problem.

This work aims to introduce the control theorist perspective to this novel development in robotics. We aim to remove the barriers to entry into this field by presenting existing results and future challenges using a unified language and within a coherent framework. Indeed, the main difficulty in entering this field is the wide variability of terminology and scientific backgrounds, making it quite hard to acquire a comprehensive view on the topic. Another limiting factor is that it is not obvious where to draw a clear line between the limitations imposed by the technology not being mature yet and the challenges intrinsic to this class of robots. In this work, we consider as intrinsic the continuum or multi\--body dynamics, the presence of a non\--negligible elastic potential field, and the variability in sensing and actuation strategies. The hystereses and non\--ideal behaviors affecting sensors, actuators, and main body are considered relevant but not intrinsic - since we believe that with the advance of the technology, these aspects should be overcome.
Of the many review papers about soft robotics \cite{trivedi2008soft,pfeifer2012challenges,kim2013soft,rus2015design,laschi2016soft,calisti2017fundamentals,cianchetti2018biomedical,wang2018toward,chen2020design,della2021soft}, only \cite{george2018control} is focused on the control challenge, which is, however, not focused on the model-based approach. 

%
\section{Finite Dimensional Models for Control Purposes}

In its exact formulation, continuum soft robots belong to the domain of continuum mechanics. As such, their dynamics is formulated as an infinite-dimensional system, i.e., via partial differential equations (PDEs). Yet, recent work has clearly shown that finite-dimensional approximations of the robot's dynamics can be formulated that assume the form of standard ordinary differential equations (ODEs). These formulations are simultaneously tractable and precise enough to describe the soft robot behavior with the necessary precision. Contrary to the rigid case, developing models is an integral part of the control design process in soft robotics. Usual models of rigid robots can serve as a base for simulating and controlling these systems. Instead, with soft robots, simulation models and algorithm design models come from different assumptions and approximations. The former must be accurate, possibly at the cost of computational efficiency and simplicity of interpretation. In contrast, the latter must be lower-dimensional. They must capture the core essence of the dynamics - possibly neglecting the finer details - and they must land themselves to be used for formally assessing the structural properties of the robot and the closed-loop behavior. The reader interested in skipping the details about modeling and jumping directly to the control part is advised to still read the subsections titled \textit{Finite-dimensional approximations} and \textit{Existence of equilibria}.

\subsection{When the rigid part is dominant}
It is not uncommon to find full-fledged soft robotic technologies integrated into essentially rigid structures. This is for example the case of the soft neck of a rigid humanoid robot discussed in \cite{deutschmann2017position,munoz2020iso}, or the soft muscles  actuating rigid links \cite{niiyama2007mowgli,brochu2012dielectric,kellaris2021spider}. In all these cases, the dynamics of the rigid part is essentially dominant with respect to the soft continuum part. Thus, the system model can be obtained with a good level of approximation by applying standard multi-body dynamics machinery to the rigid portion and accounting for the soft part through a nonlinear lumped impedance. 

\begin{figure}[t!]
	\centering
	\includegraphics[height=0.45\textwidth]{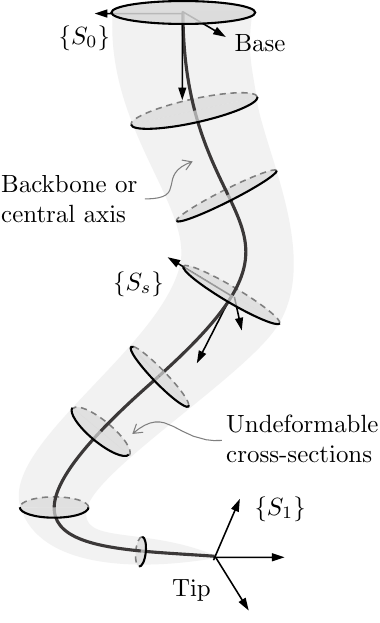}
	\hspace{0.1\columnwidth}
	\includegraphics[height=0.45\textwidth]{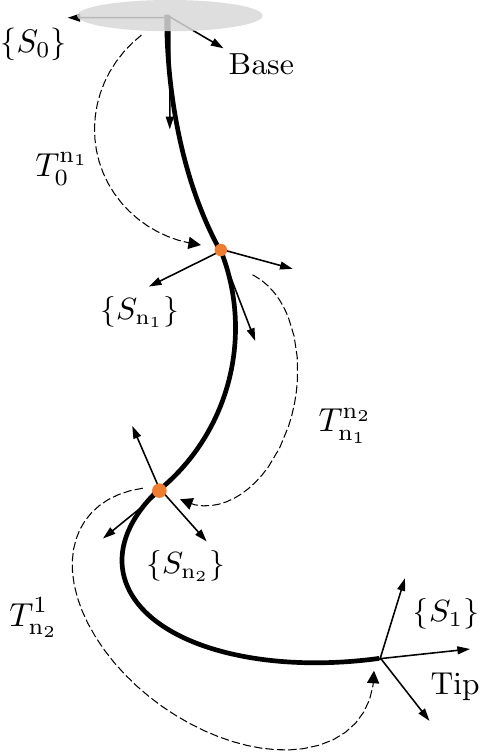}
	\caption{\small The left side shows a soft robot as described within rod theories. The central axis or backbone is a spatial curve that can deform. Cross-sections are assumed undeformable, and they are rigidly connected to the curve. Some examples of disk-like cross-sections are highlighted in the figure. To each point is also attached a reference frame $\{S_{s}\}$, where $s$ is the normalized arclength. The left side of the picture reports an example of piecewise variable strain models with three segments. First, four nodes along the backbone are identified ($0,\mathrm{n}_1,\mathrm{n}_2,1$). Then, the associated transformation matrices ($T_0^{\mathrm{n}_1},T_{\mathrm{n}_1}^{\mathrm{n}_2},T_{\mathrm{n}_2}^1$) are parametrized by a finite set of variables. The robot configuration $q$ is defined as the collection of these variables. In Piecewise Constant Curvature models, two subsequent nodes are always connected by an arc of a circle. 
	\label{fig:rod_models}}
\end{figure}

\begin{figure}
    \centering
    \includegraphics[width=.6\textwidth]{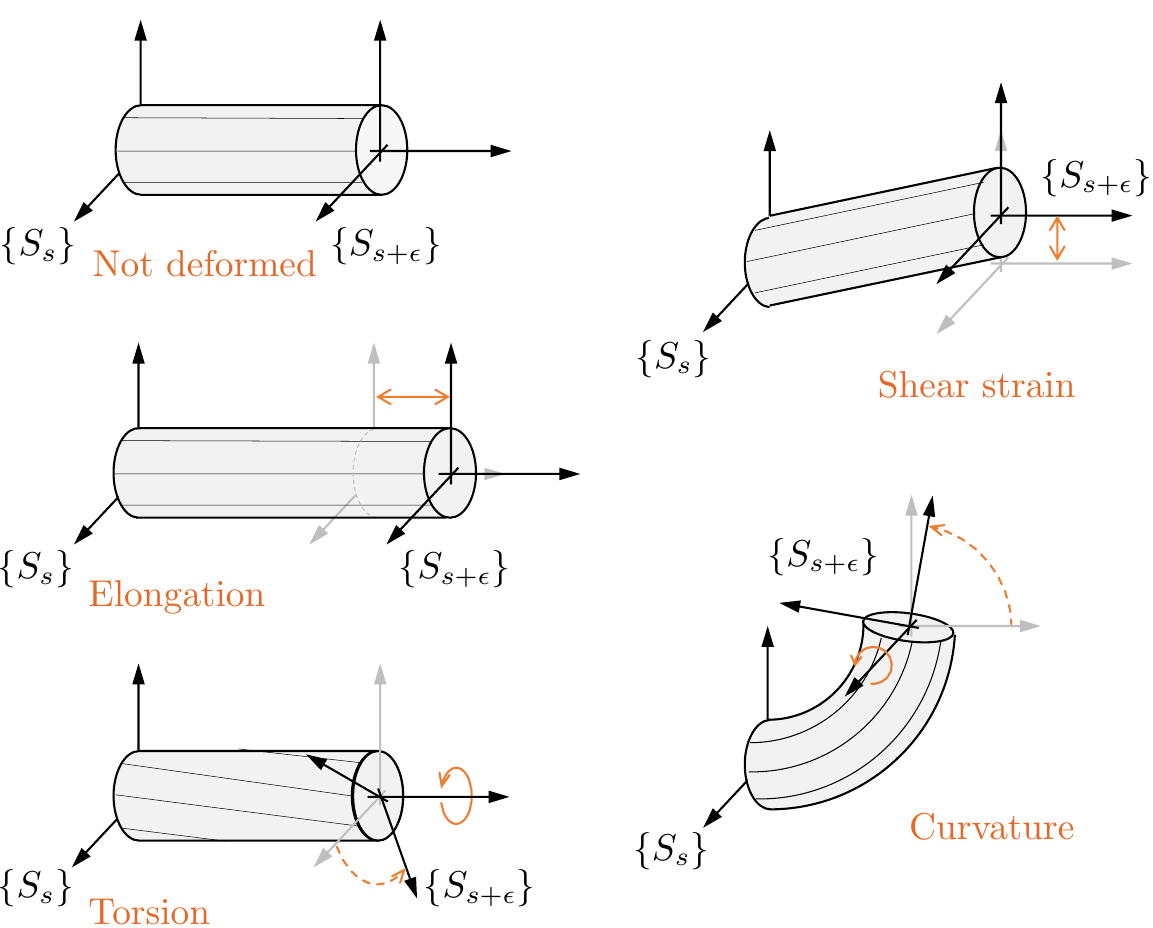}
    \caption{\small  Representation of the six pure strains, corresponding to a $\xi(s) \in \mathbb{R}^6$ with all elements but one equal to zero. Each one is associated with either a pure translation or rotation along with one of the three local axes of the reference frame ${S_{s}}$. Infinitesimal translations are referred to as elongation when occurring along the axis tangent to the backbone and as shear strain if happening along with the two orthogonal directions (only one shown in the picture). Similarly, infinitesimal rotations are referred to as torsion when happening along the axis tangent to the backbone and as curvature if occurring in the direction of the two orthogonal directions (only one shown in the picture). }
    \label{fig:deformations}
\end{figure}

\subsection{Rod Models}

Many soft robots have one physical dimension longer than the other two, such as the tentacle-like system shown by the left-hand side of Fig. \ref{fig:rod_models}. Their whole configuration can be effectively approximated by neglecting volumetric deformations and focusing on the behavior of their central axis. This assumption works well in practice and allows to rely on well-established theories in continuum mechanics of rods.

Consider a continuum and infinitely thin element (rod from now on) of length $L$, as, for example, the one in Fig. \ref{fig:rod_models}. Its posture in space can be completely described by the spatial curve $x: [0,1] \times \mathbb{R} \rightarrow SE(3)$. The domain $[0,1]$ is the normalized arc length of the rod, and $SE(3)$ is the Special Euclidean group of dimension $3$. The element $x(s,t)$ is the full posture at time $t$ of the infinitesimal element of the rod that is at a distance $s L$ from the robot's base. The total length of the rod is $L$. So, $x(t,0)$ is the configuration of the base, $x(t,1)$ of the tip, $x(t,1/2)$ of the middle point, and so on. For simplicity of notation, we will assume $SE(3)$ to be parametrized through $\mathbb{R}^6$, for example, by using Euler angles. However, the reader must be warned that many of complexities in advanced rod models come from dealing with this parametrization properly.
To each point $s$ along the rod, we can associate a mass density $m(s) \in \mathbb{R}^+$, an external load in the form of a generic wrench $f(s,t) \in \mathbb{R}^{6}$, and a velocity $\dot{x}(s,t) \in \mathbb{R}^{6}$ . The total mass of the segment is $\int_0^1 m(s) \mathrm{d} s$.
The configuration is described by the local strains: curvatures, twist, elongation, and shear. These are the variations of $x$ for the infinitesimal element $s$ with respect to the previous one. As such, they are a function $\xi:[0,1]  \rightarrow \mathbb{R}^6$ (or to a smaller space in case some of the deformations are not considered). Visualizations of pure strains are provided in Fig. \ref{fig:deformations}.
The rod posture $x$ can always be recovered from the strains $\xi$ by integration. The result is a continuous version of what in classic robotics would be regarded as the forward kinematics of the robot.

It is worth stressing that, despite the formulation referring to an infinitely thin structure (the rod), these models can and are commonly applied every time a central axis can be identified. In this case, we say that the curve $x$ is the backbone of the soft robot. Under the assumption that the cross-section of the soft robot changes negligibly during deformations, the contribution of the area can be included by associating an infinitesimal rotational inertia $\mathcal{J}(s) \in \mathbb{R}^{3 \times 3}$ to each point along the rod.

At this point, the main ingredients necessary to describe a soft robot within the rod modeling framework have been laid down. The following subsections will focus on reviewing alternative solutions for formulating the dynamics of these systems. We will start from exact infinite formulations, then quickly move to survey the various existing alternative to introducing expert intuitions into the problem and getting to a finite-dimensional model of the robot that can be used for control purposes. We will often refer to Piecewise Constant Curvature models when providing examples. This is done for simplicity and because most of the properties of more complex models are already present in this more straightforward and widespread solution.

\subsubsection{Infinite Dimensional Models}
Thanks to their capability of exactly describing continuum structures, and in virtue of their solid theoretical foundations, Kirchhoff-Clebsch-Love and Cosserat rod theories \cite{coleman1993dynamics,spillmann2007corde,lang2011multi,gazzola2018forward} are a natural choice for describing soft robots having rod-like structures (see Fig. \ref{fig:rod_models}). Leveraging these frameworks, the statics and dynamics of tendon actuated continuum and soft robots are derived and experimentally validated in \cite{trivedi2008geometrically,rucker2011statics} and \cite{renda2014dynamic} respectively. Multiple Cosserat-rod models can be combined together through coupled boundary conditions to describe the kinematics of parallel soft robots \cite{bryson2014toward,black2017parallel}. 
A tutorial on the dynamic Cosserat model for tendon-driven continuum robots is provided in \cite{janabi2021cosserat}. These models have infinite-dimensional states, and as such, they are formulated as PDEs. This makes it hard to use them directly for control (see sidebar \ref{sb:infinite_dimensional}xx for more details). Nonetheless, they can be profitably used to extract steady-state solutions. The use of the Magnus expansion to solve the kinematics of Cosserat rods is discussed in \cite{renda2020geometric, orekhov2020solving}. Also, the direct application to simulation is arduous but not impossible. For example, \cite{till2019real} performs a time discretization, which transforms the PDE into an ODE in the $s$ variable only. The latter is then solved at every time step to find the robot's shape.  Nonlinear observers can be used to speed up the convergence \cite{thamo2021rapid}.

\subsubsection{Finite dimensional approximations}
The alternative to PDE formulations is to restrict the range of possible strains $\xi$ to a finite-dimensional functional space. Two classes of strategies exist to achieve this goal: piecewise constant strain models and functional parametrizations. Both of them will be discussed in detail below. At the current stage, what is essential to keep in mind is that using these techniques, the strain $\xi$ can be approximated as a function of the vector $q \in \mathbb{R}^{n}$ that serves as the configuration of the soft robot. This critical step enables the recasting of concepts from classic discrete robotics to the new continuum context. For a start, the kinematics of a soft robot can now be defined as follows
\begin{equation}\label{eq:kinematics}
	\dot{x}(s,q(t)) = J(s,q(t))\,\dot{q}(t), \quad J(s,q) = \frac{\partial h(s,q)}{\partial q},
\end{equation}
where $h(s,q) \in \mathbb{R}^6$ is the map - called forward kinematics - connecting the configuration $q(t)$ to the posture $x(s,t)$ for each point $s$ along the backbone. The matrix-valued function $J$ is the Jacobian of $h$. The following set of ODEs can be directly derived from \eqref{eq:kinematics} via standard Lagrangian mechanics machinery
\begin{equation}\label{eq:dynamics}
	\underbrace{M(q)\ddot{q} + C(q,\dot{q})\dot{q} + G(q)}_{\text{Multi-body dynamics}} + \underbrace{D(q)\dot{q} + K(q)}_{\text{Elastic and dissipative forces}} = \underbrace{A(q) \tau}_{\text{Model of underactuation}},
\end{equation}
where $(q,\dot{q})$ forms the robot state. 

The inertia matrix $M(q) \in \mathbb{R}^{n \times n}$ is evaluated as follows
\begin{equation}\label{eq:inertia}
M(q) = \int_{0}^{1} J^{\top}(q,s) \begin{bmatrix}
m(s) I &0 \\
0 & \mathcal{J}(s)
\end{bmatrix} J (q,s) \; \mathrm{d} s \succ 0,
\end{equation}
where $m(s)$ and $\mathcal{J}(s)$ are the mass and inertia distributions respectively. 
Note that $M(q)$ may become singular in some configurations. This however can always be avoided by properly parametrizing the configurations space.
As for a rigid robot, the inertia matrix verifies
\begin{equation}
||M(q)|| \leq c_{\mathrm{m}} + c'_{\mathrm{m}} ||q||^2,
\end{equation}
where $c_{\mathrm{m}}, c'_{\mathrm{m}}$ are two positive scalars. If the elongation is considered negligible, then $c'_{\mathrm{m}} = 0$.
Coriolis and centrifugal forces $C(q,\dot{q})\dot{q} \in \mathbb{R}^{n}$ can be evaluated using the standard mathematical machinery (e.g. Christoffel symbols).
Elastic $K(q) \in \mathbb{R}^{n}$ and gravitational $G(q) \in \mathbb{R}^{n}$ actions are defined as 
\begin{equation}\label{eq:potentials}
K(q) = \frac{\partial U_{\mathrm{K}}}{\partial q}, \quad G(q) = \frac{\partial U_{\mathrm{G}}}{\partial q},
\end{equation}
where the scalar-valued functions $U_{\mathrm{K}}$ and $U_{\mathrm{G}}$ are the associated potential energies. They are obtained as integration along the spatial coordinate of the energetic contributions of each infinitesimal elements.
The elastic force field is always positive definite, and thus the stiffness matrix
\begin{equation}\label{eq:K_pd}
    \frac{\partial K(q)}{\partial q} \succ 0.
\end{equation}
In some corner cases (e.g., floating base) this matrix may be semi-positive definite instead.
%
%
Gravitational forces are bounded as follows
\begin{equation}
||G(q)|| \leq c_{\mathrm{g}} + c'_{\mathrm{g}} ||q||,
\end{equation}
where $c_{\mathrm{g}}, c'_{\mathrm{g}}$ are two positive scalars, with the latter being equal to $0$ if no elongation is present.
The friction losses are usually modeled as a possibly nonlinear damping action $D(q)\dot{q}$, with $D \succ 0$. 
The input field $A(q) \in \mathbb{R}^{n \times m}$ is the transpose of the Jacobian mapping the $m \leq n$ actuation forces from their point of application to the configuration space.
Indeed, control actions are often not directly collocated on the states \cite{zaidi2021actuation}. Without loss of generality, we assume $A$ to be full rank columns. Some representative examples of actuation matrices are provided in Fig. \ref{fig:actuation}. 
\begin{figure}
    \centering
    \includegraphics[width = .8\textwidth]{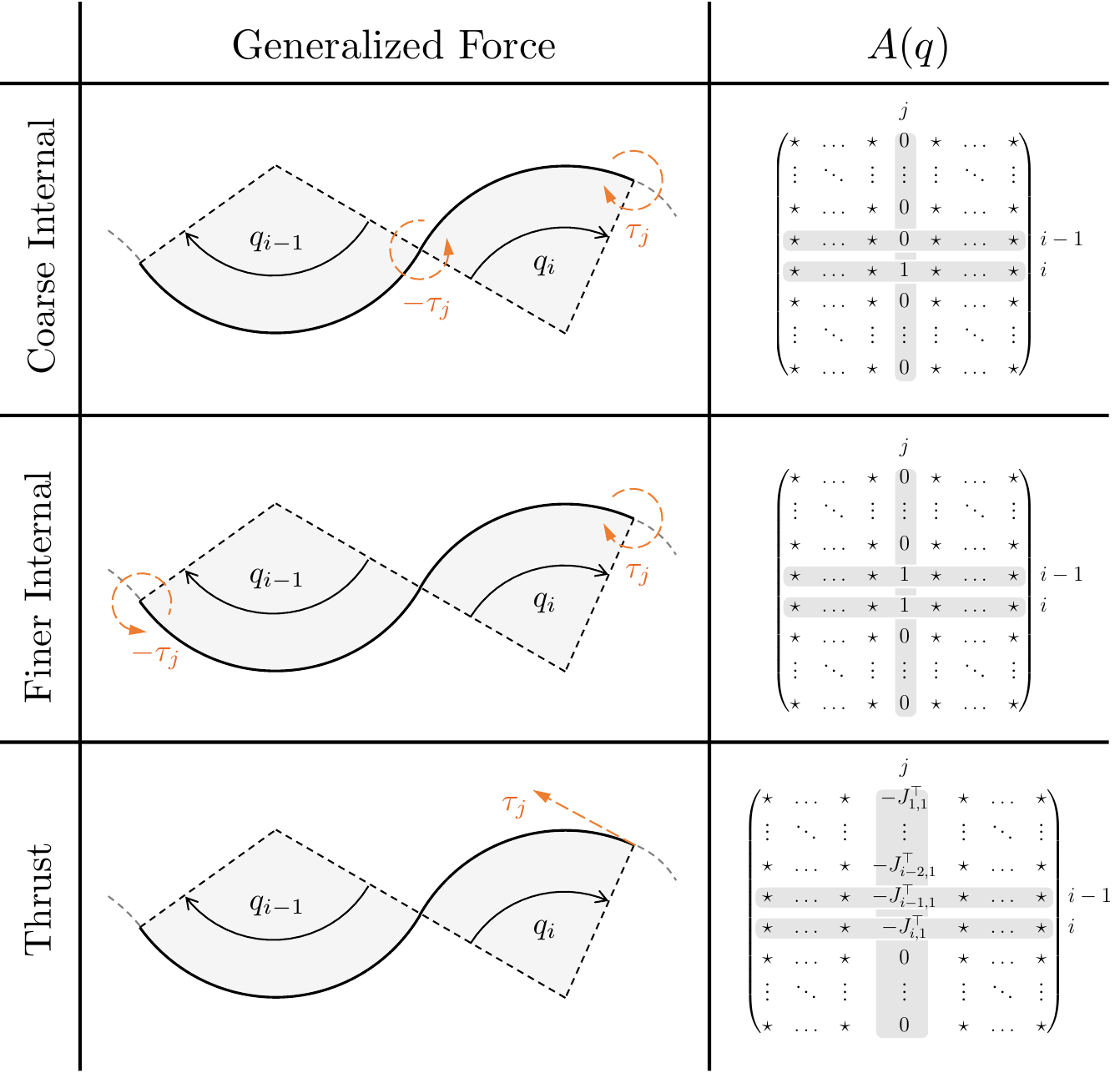}
    \caption{\small Examples of actuation patterns resulting in different elements of the matrix $A(q)$. The examples are focused on how a generalized force $\tau_j$ that is applied to the $i\--$th segment is mapped in the configuration space. Constant curvature discretization is used for illustrative purposes. The first two rows of the table report the case of one actuated segment being represented with one or two constant curvature segments (coarse and finer discretization respectively). The actuator produces an internal pair of opposite torques, as in figure. This is the case of a pair of pressurized chambers, or of a tendon driven system with motors placed at the base of the segment. In the third row of the table, $\tau_j$ is a force applied tangentially to the tip of the $i\--$th segment. Here, $J_{k,1}$ is the $k\--$th element of the Jacobian mapping $\dot{q}$ in the first linear velocity of the tip of segment $i$.}
    \label{fig:actuation}
\end{figure}

\subsubsection{Piecewise Constant Strain Approximations}
This family of discretization methods works by assuming that the strain $\xi$ is piecewise constant in $s$, with discontinuities happening at fixed points along the rod, called nodes. The right hand side of Fig. \ref{fig:rod_models} shows an example of one such a model. The most straightforward implementation of this principle is planar Piecewise Constant Curvature (PCC) models. Here, all strains but one curvature are neglected. The curvature itself is assumed to be piecewise constant. The resulting shape is a sequence of arcs connected in such a way that $x$ is everywhere differentiable, as shown in Fig. \ref{fig:deformations}.  The vector $q \in \mathbb{R}^n$ collecting all the local curvatures (one per CC segment) is the finite-dimensional configuration of a PCC robot. Thus, a PCC robot has as many degrees of freedom $n$ as the number of considered segments $n_{\mathrm{S}}$.
Soft robots under PCC approximation can be seen as a direct extension of serial manipulators with revolute joints to the continuum domain. Instead of being localized to one point (the joint), the change in angle is here homogeneously distributed along the segment.
%
Note indeed that the curvature is equivalent to the angle subtended by the CC arc - also called bending angle - since it is defined with respect to a normalized arc length.
The kinematics and dynamics of a single constant curvature segment are analyzed in sidebar \ref{sb:CC_derivation}\ref{?}.

The kinematic description \eqref{eq:kinematics} of PCC robots has been intensively used for more than a decade, as proven by the seminal review paper \cite{webster2010design} published in 2010.
The simplicity of their kinematics has been indubitably a major role in fostering this success, together with their effectiveness in describing real systems. Note indeed that a piecewise constant curvature is the exact steady state solution of the infinite dimensional model when only pure torques are applied along the structure.
When moving to 3D space, the geometrical characterization of a PCC robot as a sequence of arcs remains unvaried, but this time the plane of bending can change. This effectively introduces two degrees of freedom per segment, i.e. $n = 2 n_{\mathrm{S}}$.
The usual way in which this motion is represented in the literature is by including into $q$ the orientation of the $n$ relative orientations of the planes of bending. Although intuitive, this representation introduces singularities and discontinuities \cite{jones2007limiting} which can be avoided with alternative parametrizations \cite{chawla2018comparison,felt2019inductance,allen2020closed,della2020improved}. Occasionally piecewise constant elongation is also considered, with discontinuity points coincident to the curvature ones, thus leading to $n = 2 n_{\mathrm{S}}$ for the planar case and $n = 3 n_{\mathrm{S}}$ for the 3D one.

As discussed for the general case, the dynamic model \eqref{eq:dynamics} of a PCC robot is obtained combining \eqref{eq:kinematics} with the physical characteristics of the system through standard Lagrangian formalism \cite{godage2016dynamics}. 
Yet, when multiple segments are considered it is unpractical to derive closed form expressions of $M$ and $G$. Approximations of the mass distribution are thus often imposed to simplify the model derivations. Models having the mass of each segment lumped into a single along the rod are discussed in \cite{falkenhahn2015dynamic,wang2016three}. Alternatively a lumped mass and inertia can be placed at the center of mass \cite{godage2015accurate,godage2019center}, therefore neglecting only the change in rotational inertia. Under this hypothesis, the soft robot can be represented through an augmented rigid robot model, therefore enabling the use of standard tools for calculating the expression of dynamic forces \cite{della2020model,katzschmann2019dynamic,trumic2020adaptive}.

%
A simpler alternative to PCC which can be used to model robots bending and elongating are rigid link approximations \cite{kang2012dynamic,roesthuis2016steering}. The rod is here approximated through a sequence of links connected with standard independent joints. This is equivalent to considering a $\xi$ which is null everywhere, except for a finite set of points where it assumes the value of a Dirac's delta.
 Lumped springs are added in parallel to each joint to describe the robot's impedance. The resulting structure is a standard rigid robot with parallel elasticity, and it is therefore described by a set of ODE as \eqref{eq:dynamics}.
Kinematic model of parallel soft manipulators based on these strategies are discussed in  \cite{altuzarra2019position,nuelle2020modeling,chen2021kinetostatics}.

PCC models can be extended further by adding also piecewise constant shear deformation and twist. 
This is done by the piecewise constant strain models proposed in \cite{renda2018discrete,grazioso2019geometrically}, which define procedures for extracting models in the form \eqref{eq:dynamics}, where $q \in \mathbb{R}^{6n_{\mathrm{s}}}$. The inertia matrix $M(q)$ implements a full dynamic coupling within the six strains. On the contrary, the elastic part of the potential forces $K(q)$ is usually either diagonal or block diagonal. 
Thanks to their capability of describing complex strain conditions naturally arising in closed kinematic chains, these models can be used to describe soft parallel structures \cite{armanini2021discrete}.

\subsubsection{How Fine Should the Discretization Be?}
For a given soft robot to be modeled under a piecewise constant strain approximation, the number of segments $n_{\mathrm{s}}$ to be considered is up to the designer of the model to decide as a result of application-specific considerations. 
First, the mechanics of the robot imposes a lower bound in case the robot is obtained as a sequence of actuated modules, since considering less than one constant strain segment for each actuated one would generate incoherent behaviors (e.g. actuating one segment would result in a motion in the following one). Also, it is in general inconvenient to have constant strain segments shared between more than one actuated segment.
If the robot is used for simulation, then a trade-off between accuracy and simplicity must be established. For $n_{\mathrm{s}} \rightarrow \infty$ the model converges to the exact continuum representation, but the computational cost for the simulation will increase as $O(n_{\mathrm{s}}^2)$ at the best \cite{featherstone2014rigid}. 
In case the model is used for control design, then the amount and the location of the segments may change the structural properties of \eqref{eq:dynamics}.
It is for example common to place segments in such a way that the resulting model is fully actuated, i.e. $A(q)$ is square and full rank.
For planar PCC models, if $\tau$ are torques applied at the tip of each CC segment then $A(q) = I$ - therefore further strengthening the parallelism with standard serial rigid robots. Most of existing actuation technologies will satisfy this hypothesis (e.g. tendons, fluids).

\subsubsection{Functional Parametrizations}
Instead of discretizing along the arclength, the reduction of dimensionality can happen by projecting onto a low dimensional functional subspace as follows
\begin{equation}\label{eq:functional_xi}
    \xi(s,t) = \sum_{i = 1}^{n} \pi_i(s) q_i(t),
\end{equation}
where $\{\pi_1(s),\dots,\pi_n(s)\}$ is a base of the subspace, 
and the weights $q_i(t)$ can be taken as the configuration of the robot. 
One simple way of selecting $\pi_i(s)$ is to truncate an infinite dimensional basis of a regular-enough functional space. In this way, it is ensured/guaranteed that the approximation converges to the exact model for $n \rightarrow \infty$. It is also convenient to include constant functions, so that the model is a proper extension of the constant curvature or constant strain ones.
For example, in the case of inextensible planar soft robots (i.e. $\xi$ contains only the curvature), polynomials $\pi_i(s) = s^{i-1}$ are a basis that satisfies both conditions.
This modeling technique is widely used in flexible link robots \cite{de2016robots} to represent link vibrations. In this case, the base functions $\pi_i(s)$ are selected as the $n$ slowest modes of the Euler Bernoulli beam modeling the link \cite{book1984recursive,de1993inversion}.
Functional expansions are also used in continuum mechanics to approximate the equilibria of rods and beams \cite{echter2010numerical,greco2013b}.
The application to rod-like structures originated from the field of hyper-redundant robots \cite{chirikjian1994modal}. Here, the rod serves as a continuum approximation of the discrete system \cite{mochiyama1999shape}. 
The dynamics of polynomial curvature soft robots is discussed in \cite{della2019control,della2020soft}.
Alternatively, it is the position part of Cartesian configuration $x$ that can be projected onto a polynomial space \cite{sadati2017control,singh2018modeling} (polynomial shape robot), or on a base derived as the truncated Taylor expansion of the forward kinematics \cite{godage2011shape,godage2011novel}.
This strategy is also investigated in \cite{navarro2017fourier,qi2020adaptive,palli2020model} when modeling deformable objects. 

No matter the choice of $\pi_i$, the robot's dynamics can still be formulated as \eqref{eq:dynamics}. However, a full dynamic and potential coupling among all the degrees of freedom exists in general. Finally, the piecewise and functional strategies can be combined into piecewise variable strain models \cite{boyer2020dynamics}. In this case, the transformation matrices in the right hand side of Fig. \ref{fig:rod_models} are parametrized using a functional approach.

\subsection{Finite Element Models}

\begin{figure}[t!]
	\centering
	\includegraphics[width=.475\textwidth]{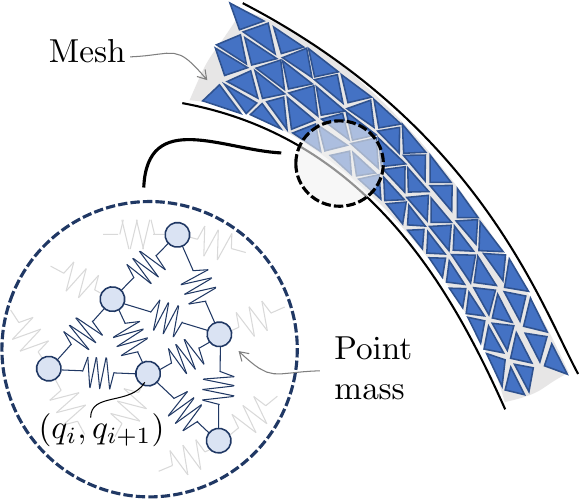}
	\caption{\small The dynamics of the planar portion of a soft robot is approximated through FEM as a network interconnection of points mass and springs. A mesh defines the topology of this interconnection. The configuration $q$ of the robot is defined as the collection of the Cartesian positions of all masses. \label{fig:FEM_models}}
\end{figure}

In Finite Element Models (also Methods, FEM) of deformable solids \cite{rust2015non}, the geometric shape is described by identified a mesh, which is a set of nodes together with the information of which are their neighbors (Fig. \ref{fig:FEM_models}). If the position of nodes is known, an approximation of the entire volume results from interpolation.  FEM is, therefore, the preferred solution whenever the changes in the three-dimensional structure of the robot are not negligible compared to its virtual backbone. In their general definition, finite element methods are formulated as ways of approximating solutions of PDEs.
As such, FEM can be used to discretize rod models of soft robots \cite{grazioso2019geometrically}. Furthermore, this framework naturally extends to modeling systems encompassing multiple continuous behaviors - e.g. magnetic, thermal, fluid \cite{zimmerman2006multiphysics}.

The discussed space interpolation naturally leads to a kinematic description in the form of \eqref{eq:kinematics}. The configuration $q \in \mathbb{R}^{n}$ is the collection of the nodes location in the space. So, $n$ is in general three times the number of nodes. Since the full volume is explicitly considered, the forward kinematics $h(s,q)$ is to be parametrized with $s\in\mathbb{R}^3$, rather than a scalar.  
Similarly to rod models, FEM have the advantageous property of converging to the exact model when $n$ tends to infinity. Using several thousand nodes, in general, produces a very accurate model at the cost of a quite large configuration space.
Note however that measuring or observing the whole state of a FEM model is often not needed when implementing closed loop controls. 
Dynamic equations in the form \eqref{eq:dynamics} result from the application of Lagrangian machinery to \eqref{eq:kinematics}, in a similar fashion as for the rod case.
Discretization in strain space $\xi$ usually results in a linear $K(q)$ and constant $D(q)$, at the cost of a configuration dependent inertia $M(q)$. 
On the contrary, in FEM analysis the following simplifications are introduced whenever the mass is assumed concentrated to the nodes: ${\partial M(q)}/{\partial q} = 0$, $C(q,\dot{q}) = 0$, and ${\partial G(q)}/{\partial q} = 0$.
Thus, the multi-body dynamic part of \eqref{eq:dynamics} simplifies into $M \ddot{q} + G$. Furthermore, $M$ is diagonal if $q$ represents the nodes' configuration in the space, which implies that there is no dynamic coupling.
On the downside $K(q)$ is usually nonlinear, and $D(q)$ and $A(q)$ are rarely constant.

\subsubsection{Model order reduction}
The high dimensional configuration space of a FEM model can be compressed by selecting a set of $n_{\mathrm{r}} << n$ directions of interest \cite{qu2004model}. In practice, $n_{\mathrm{r}}$ is usual in the order of a few dozens, and thus $n/n_{\mathrm{r}} \gtrsim 10^2$. The reduced order configuration is defined as $q_{\mathrm{r}} = \Phi q$, where for simplicity it is assumed that the configuration $q$ is defined such that $q = 0$ is the system equilibrium for $\tau = 0$.
This projection is conceptually similar to the functional parametrizations discussed above for rod dynamics. 
The resulting dynamics in $q_{\mathrm{r}}$ space is
\begin{equation}\label{eq:dynamics_reduced_FEM}
\left(\Phi^{\top} M \Phi\right) \ddot{q}_{\mathrm{r}} + \Phi^{\top} G + \left(\Phi^{\top} D(\Phi q_{\mathrm{r}}) \Phi\right) \dot{q}_{\mathrm{r}} + \Phi^{\top} K(\Phi q_{\mathrm{r}}) = \Phi^{\top} A(\Phi q_{\mathrm{r}}) \tau,
\end{equation}
which is again in the general form \eqref{eq:dynamics}. Note that $\Phi^{\top} M \Phi$ is diagonal if $M$ is diagonal and if $\Phi$ is orthogonal.
The matrix $\Phi$ should be selected in such a way that the solution of \eqref{eq:dynamics_reduced_FEM} is representative of the evolution of \eqref{eq:dynamics} for the full FEM model, as soon as the initial condition satisfies $(q_{\mathrm{r}}(0), \dot{q}_{\mathrm{r}}(0)) = (\Phi q(0), \Phi \dot{q}(0))$.
Modal analysis is a well known tool in FEM theory to achieve this goal \cite[Ch. 12]{qu2004model}. The linear modes of the linearized system about the equilibrium configuration are calculated. The eigenvectors associated to the smallest eigenvalues (slowest modes) are used to build $\Phi$. The value of $n_{\mathrm{R}}$ can be defined according to the frequency range of vibrations that the designer is interested in capturing. This procedure implicitly operates a regularization of the FEM model, getting rid of many numerical vibrations happening at high frequency, which can be assimilated to numerical noise. 
Alternatively, the columns of $\Phi$ can be evaluated from the singular value decomposition of a dataset of representative evolutions of the system \cite{goury2018fast}. 
These analyses can be extended to nonlinear deformations \cite{kerschen2009nonlinear,sin2013vega,chenevier2018reduced}.  
The extension to the modeling of soft robot with self-contact forces is discussed in \cite{goury2021real}. 
One reason for $n$ to reach high values - up to hundreds of thousands - is that the robot geometry has many details (thinned areas, holes, small grooves). 
This issue can be tamed through several methods as for example X-FEM \cite{oliver2006comparative,yazid2009state}, which however does not radically reduce the size of the models.
Condensation is another widely used approach. It operates model reduction by partitioning the configuration space in loaded - directly actuated - and unloaded variables. The loading conditions are modeled as holonomic constraints and solved with Lagrangian multipliers \cite{paz1984dynamic,qu2004model,duriez2013control}.
Condensation has also been used to connect FEM and rod models in \cite{adagolodjo2021coupling}.
A recent review paper on model reduction techniques is also available \cite{sadati2021reduced}.

\begin{figure}[th!]
	\centering
	\includegraphics[width=.4\textwidth]{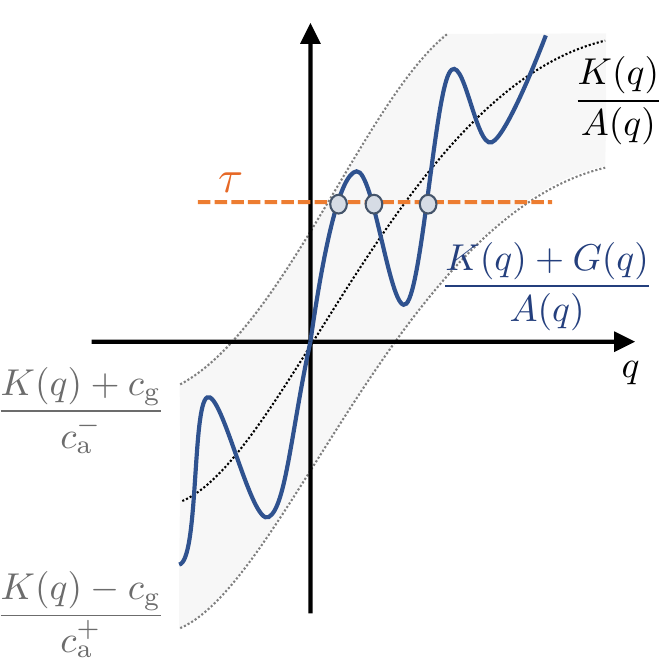}
	\caption{\small  In the scalar case, equation \eqref{eq:equilibrium} always has at least one solution for unbounded $K$ and bounded $A$ and $G$. This figure reports a pictorial representation of the reasoning behind the proof of this statement. \label{fig:equilibrium}}
\end{figure}

\subsection{Existence of Equilibria}

The equilibrium configurations of \eqref{eq:dynamics} associated to a control input $\bar{\tau}$ are all the $\bar{q} \in \mathbb{R}^n$ such that 
\begin{equation}\label{eq:equilibrium}
	K(\bar{q}) + G(\bar{q}) = A(\bar{q}) \bar{\tau}.
\end{equation}
Due to the many nonlinearities involved, in general the solutions of \eqref{eq:equilibrium}  cannot be expressed in closed form. 
An exception is when $K + G$ is monotone (e.g., high enough stiffness) 
%
%
and the input field $A$ is configuration independent. In this case the solution of \eqref{eq:equilibrium} is
\begin{equation}\label{eq:equilibium_nog}
    \bar{q} = (K + G)^{-1} (A \, \bar{\tau}).
\end{equation}
So, a single equilibrium always exists for any choice of actuation.
If $K + G$ is not monotone or if the actuation field $A$ is configuration dependent, the existence of at least one equilibrium for any given $\bar{\tau}$ is to be expected if $K(q)$ is radially unbounded (i.e. stiffness not vanishing). Several solutions to this equation may in general exist.
Consider for example the case of $n=1$ and $c_{\mathrm{a}}^- < A < c_{\mathrm{a}}^+$. Thus, if $K(q)$ is radially unbounded and $G(q)$ is limited, then also $K/A$ and $G/A$ are. Thus, $(K + G)/A$ is radially unbounded, even if in general not monotonic. It is also a continuous function.
As a consequence, there is always at least one equilibrium configuration, that is a configuration $\bar{q}$ that verifies $(K(\bar q) + G(\bar q))/A(\bar q) = \tau$. This sketch of proof is visually represented by Fig. \ref{fig:equilibrium}.
The existence of at least one equilibrium for any constant actuation is in sharp contrast with classic rigid robots, for which a constant actuation can never result in an equilibrium configuration unless gravity is involved.

\subsection{Actuators dynamics}
Soft robots are actuated through a wide variety of strategies. Yet, relatively few attention has been devoted so far to incorporating the dynamics of actuators in dynamic models used for control.
Nonetheless, we can still provide a model based on similar formulations from classic robotics \cite{spong1987modeling,della2021flexible}, which holds for actuation strategies where the main function components of the actuators are themselves mechanical (e.g. tendons actuated through electric motors, fluids pressurized through pistons)
\begin{align}
M(q)\ddot{q} + C(q,\dot{q})\dot{q} + D(q)\dot{q} + K(q) + G(q) + \frac{\partial U_{\mathrm{c}}}{\partial q}(q,\eta) &= 0, \label{eq:dynamics_r_act}\\
B(\eta)\ddot{\eta} + H(\eta,\dot{\eta})\dot{\eta} + 
\frac{\partial U_{\mathrm{c}}}{\partial \eta}(q,\eta) &= \tau,\label{eq:dynamics_a_act}
\end{align}
where we not include dissipation in the actuation system for simplicity. The configuration of all the actuators is collected in $\eta \in \mathbb{R}^{m}$
, and $B,H \in \mathbb{R}^{m\times m}$ are the associated inertia and Coriolis matrices. The former is usually diagonal and configuration independent, and in turn $H = 0$. This is however not always the case, an exception being magnetically actuated soft robots with magnets moved by a rigid robot \cite{pittiglio2019magnetic}.

The coupling between the dynamics \eqref{eq:dynamics_r_act} and \eqref{eq:dynamics_a_act} is purely mediated by the potential field $U_{\mathrm{c}}$, which models elasticity of tendons, molecular interactions in compressible fluids, or electro-magnetic fields, just to cite a few.
In case the dynamics of $\eta$ is fast compared to $q$, as well as robustly globally stable, then \eqref{eq:dynamics_a_act} can be approximated with its steady state behavior $\eta \simeq \bar\eta(q,\tau)$.  In this case ${\partial U_{\mathrm{c}}} (q,\bar\eta(q,\tau)) /{\partial q}$ serves as a generalization of the input field $A(q)\tau$ appearing in \eqref{eq:dynamics}. 
Alternatively, singular perturbation theory can be used to separate the fast actuator dynamics from the slow soft robot one, without applying quasi-static approximations \cite{kokotovic1976singular}.

\subsection{Simulators}
A bottleneck to entering into the field of soft robots control has been the need to implement the simulator of the soft robot. This is especially troublesome when considering that the models used for simulation are typically way more sophisticated than the ones used for control. Luckily, there are now several open-source solutions available: SOFA \cite{allard2007sofa,coevoet2017software} and ChainQueen \cite{hu2019chainqueen} use velumetric FEM techniques, while Elastica \cite{naughton2021elastica}, TMTDyn \cite{sadati2019tmtdyn}, SimSOFT \cite{grazioso2019geometrically}, and SoRoSim \cite{mathew2021sorosim} implement discretizations of rod modes. More details on simulators for soft robots can be found in \cite[Sec. VII]{collins2021review}.
Still, selecting the right model among all the available ones is a task with no clear solution. Experimental comparisons as the ones provided in \cite{chikhaoui2019comparison,rao2021model} can be a useful tool in this context.

\section{Shape Control in the Fully Actuated Approximation}
The primary task of control architectures in classic robotics is to accurately manage the posture of the robot - i.e. state space control. In the case of soft robots, this translates into devising strategies to control the whole shape of the system, that is controlling $q$. Depending on the model used for control design, this task may translate into different goals - as for example curvature, strain, or volume control - which however share a same set of characteristics.  The importance of carefully selecting the model used for control design becomes therefore apparent. Indeed, applying a same control solution with different models will in general produce substantially different closed loop behaviors - both in terms of transient and steady state.
We start in this sections with robots that can be effectively modeled as fully actuated - i.e., $m = n$ and without loss of generality $A(q) = I$. We will see how this approximation allows to already acquire important insight on the behavior of soft robots.

\begin{figure}[th!]
	\centering
	\includegraphics[height=.33\textwidth]{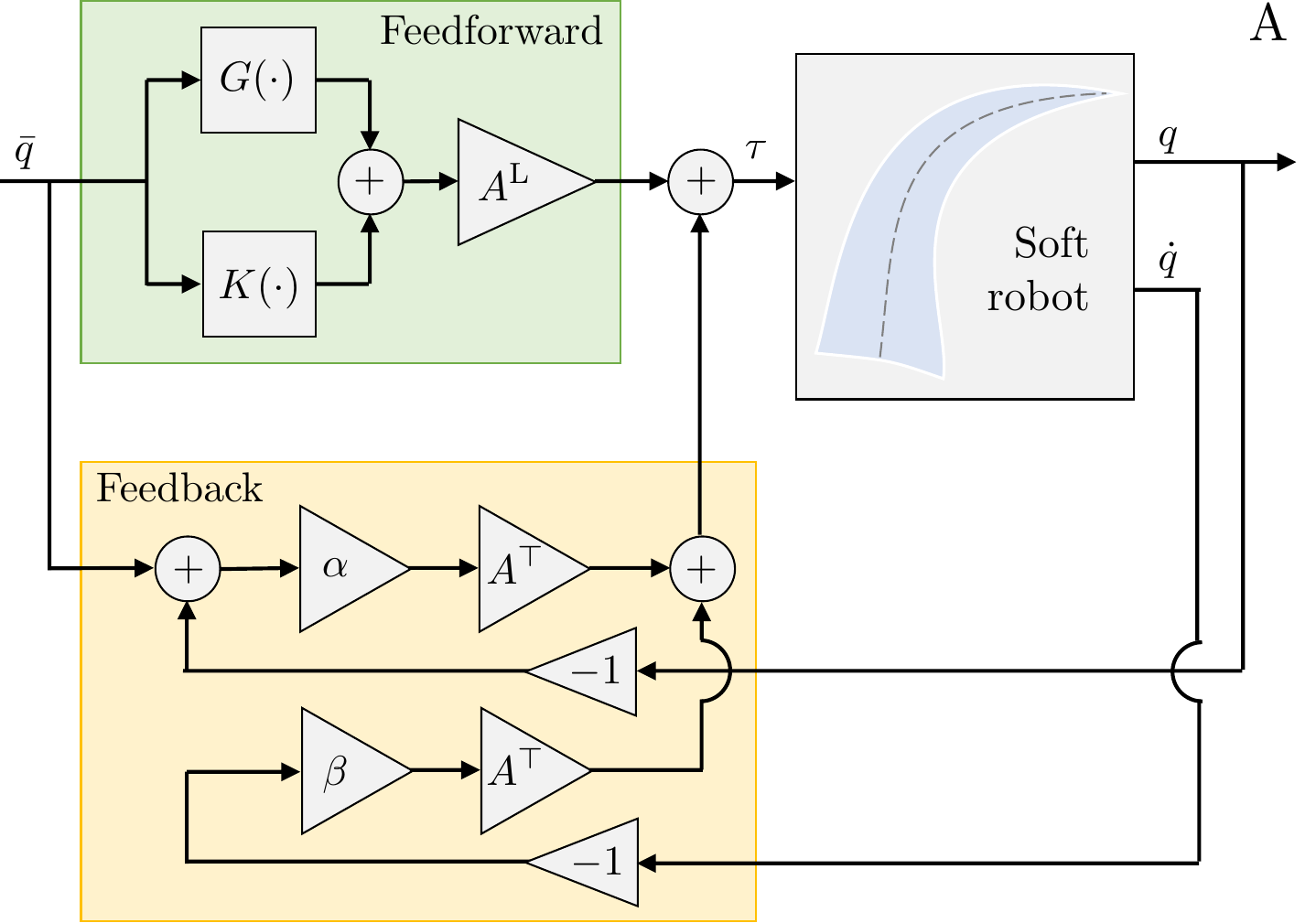}
	\hspace{0.01\columnwidth}
	\includegraphics[height=.33\textwidth]{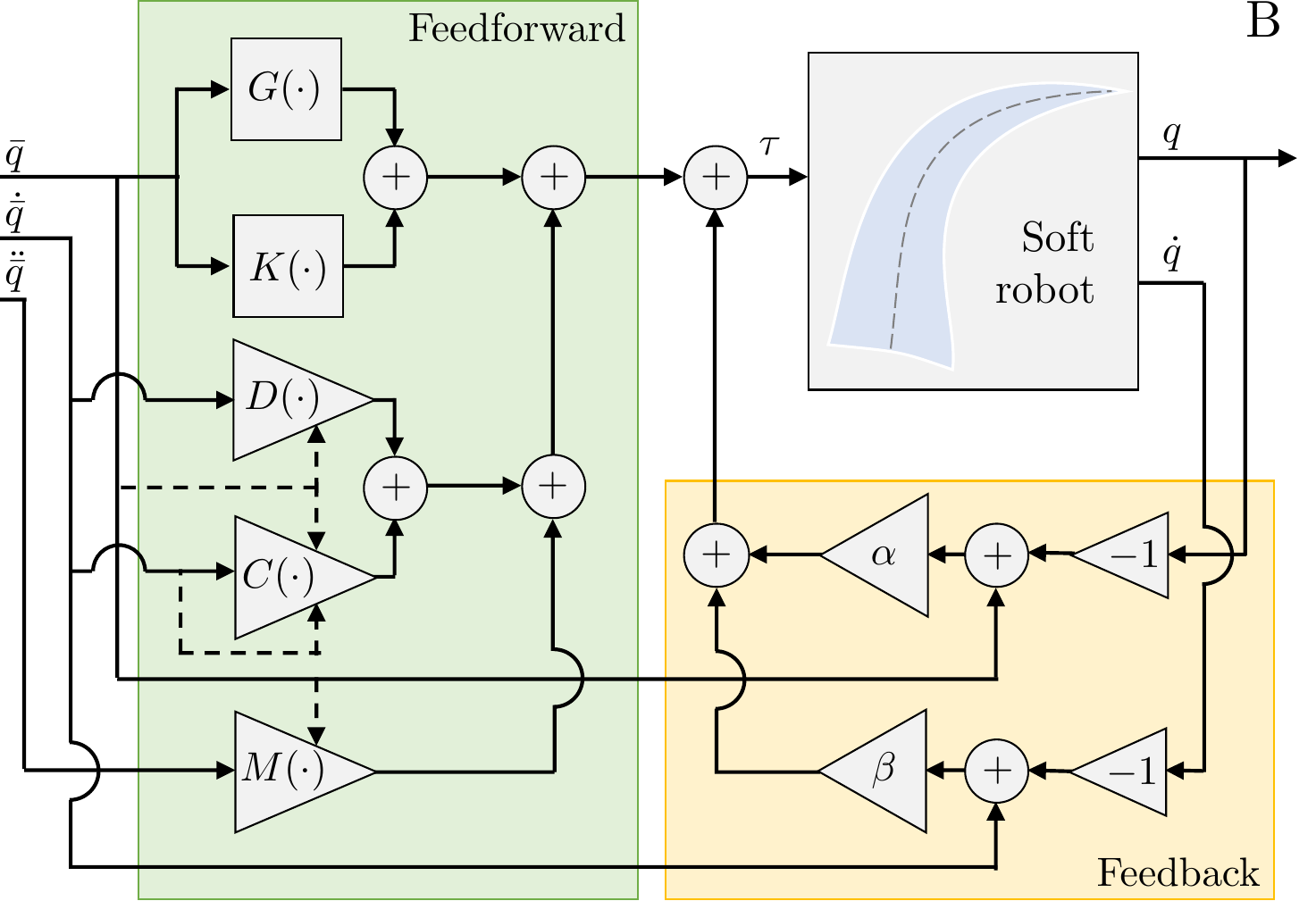}
	\caption{\small  Block schemes of PD-like controllers for shape regulation and tracking. Panel A depicts the regulator \eqref{eq:PD_ua}, which corresponds to \eqref{eq:PD_setpoint} when $A = I$, to \eqref{eq:ff_under} when $\alpha,\beta$ are null, and the pure feedforward action \eqref{eq:feedforward_dynamic} when both conditions are fulfilled. Panel B shows the high gain PD controller with feedforward compensation \eqref{eq:PD_trajectory} for trajectory tracking under fully actuated approximations. If the dashed lines are connected to $q$ and $\dot{q}$ instead of the reference, the controller directly extends the classic PD+. Feedforward and feedback components are highlighted in figure.
	\label{fig:regulation_tracking_block}}
\end{figure}


%
%
%
\subsection{Posture regulation}
%
Posture regulation is defined as follows: given a desired constant configuration $\Bar{q} \in \mathbb{R}^{n}$ find a control action $\tau \in \mathbb{R}^{m}$ such that the configuration of the soft robot $q \in \mathbb{R}^{n}$ eventually converges to the desired one, i.e.,
\begin{equation}
    \lim_{t \rightarrow \infty} q(t) = \Bar{q}.
\end{equation}

It has been already discussed in the previous section that an equilibrium is always associated to any constant control input -  as exemplified by \eqref{eq:equilibium_nog}. We show here that this equilibrium is also asymptotically stable under opportune conditions on the mechanical impedance of the robot.
Consider the following purely feedforward controller (Fig. \ref{fig:regulation_tracking_block}) 
\begin{equation}\label{eq:feedforward_dynamic}
    \tau(\Bar{q}) = K(\Bar{q}) + G(\Bar{q}),
\end{equation}
where $K$ and $G$ are the elastic and gravitational fields with potentials $U_{\mathrm{K}}$ and $U_{\mathrm{G}}$ respectively, as defined in \eqref{eq:potentials}.
Substituting \eqref{eq:feedforward_dynamic} in \eqref{eq:dynamics} and rearranging terms yield
\begin{equation}\label{eq:dynamics_regulation_ff}
    M(q)\ddot{q} + C(q,\dot{q})\dot{q} = \underbrace{(K(\bar{q}) + G(\bar{q})) - (K(q) + G(q))}_{\text{Physical P-loop}} + \underbrace{D(q)(-\dot{q})}_{\text{Physical D-loop}},
\end{equation}
where we can recognize the same mathematical structure of a classic robot (left hand side) controlled through a nonlinear PD regulator (right hand side). 
Note indeed that the reference is constant, and thus $\dot{\bar{q}} = 0$.
It is worth stressing that the physical system is excited with a simple feedforward at this stage, which only behaves as a PD regulator when combined with part of the robot's dynamics.

The control community has devoted much attention to (nonlinear) PD controllers \cite{aastrom2001future}, 
which has produced a thriving literature which soft roboticist can borrow from \cite{kelly1996class,kelly1997pd,kelly1998global}, by relying on \eqref{eq:dynamics_regulation_ff}, as for example done in the following theorem.

\begin{Theorem}\label{th:ff}
The state $(\bar{q},0) \in \mathbb{R}^{2n}$ is an  asymptotically stable equilibrium of system \eqref{eq:dynamics} subject to the constant control action \eqref{eq:feedforward_dynamic} if an open neighbourhood $\mathcal{N}(\bar{q}) \subseteq \mathbb{R}^n$ of $\bar{q}$ exists such that $\forall q \in \mathcal{N}(\bar{q})/\{\bar{q}\}$,  
\begin{equation}\label{eq:potential_high}
    \left(U_{\mathrm{G}}(q) + U_{\mathrm{K}}(q)\right)  > \left(U_{\mathrm{G}}(\bar{q}) +  U_{\mathrm{K}}(\bar{q})\right) + \left.\left(\frac{\partial}{\partial q} \left(U_{\mathrm{G}}(q) + U_{\mathrm{K}}(q)\right)\right)\right\rvert_{q = \bar{q}}^{\top}\!\!(q - \bar{q}),
\end{equation}
and
\begin{equation}\label{eq:isolated}
    G(q) + K(q) \neq G(\bar{q}) + K(\bar{q}).
\end{equation}
These two conditions also imply that $\mathcal{N}(\bar{q})$ is fully included within the region of asymptotic stability of $\bar{q}$.
\end{Theorem}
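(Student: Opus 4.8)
\emph{Proof strategy.} The plan is to treat the closed loop \eqref{eq:dynamics_regulation_ff} exactly as a rigid manipulator under a nonlinear PD regulator and run a Lyapunov/LaSalle argument in the spirit of the classical PD-regulation results. First I would introduce the \emph{shifted potential}
\[
\tilde U(q) \;:=\; \bigl(U_{\mathrm G}+U_{\mathrm K}\bigr)(q) \;-\; \bigl(U_{\mathrm G}+U_{\mathrm K}\bigr)(\bar q) \;-\; \bigl(G(\bar q)+K(\bar q)\bigr)^{\!\top}(q-\bar q),
\]
which by \eqref{eq:potentials} satisfies $\nabla\tilde U(q) = \bigl(K(q)+G(q)\bigr)-\bigl(K(\bar q)+G(\bar q)\bigr)$, so that $-\nabla\tilde U$ is exactly the ``physical P-loop'' term appearing in \eqref{eq:dynamics_regulation_ff}; in particular $\nabla\tilde U(\bar q)=0$, confirming that $(\bar q,0)$ is genuinely an equilibrium. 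As Lyapunov candidate I would take $V(q,\dot q) = \tfrac12\,\dot q^{\top} M(q)\,\dot q + \tilde U(q)$, the sum of a kinetic term and the shifted potential energy.

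Positive definiteness on $\mathcal N(\bar q)\times\mathbb R^n$ is immediate: $V(\bar q,0)=0$; the kinetic term is nonnegative and vanishes only for $\dot q=0$ since $M(q)\succ0$ by \eqref{eq:inertia}; and $\tilde U(q)>0$ for $q\in\mathcal N(\bar q)\setminus\{\bar q\}$ is precisely hypothesis \eqref{eq:potential_high}. For the derivative along trajectories I would differentiate $V$ and substitute $M(q)\ddot q = -C(q,\dot q)\dot q - \nabla\tilde U(q) - D(q)\dot q$ read off \eqref{eq:dynamics_regulation_ff}; the two $\nabla\tilde U$ contributions cancel, leaving $\dot V = \tfrac12\,\dot q^{\top}\bigl(\dot M - 2C\bigr)\dot q - \dot q^{\top} D(q)\dot q$. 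Invoking the standard skew-symmetry of $\dot M(q)-2C(q,\dot q)$ valid for the Christoffel form of the Coriolis matrix, the first term vanishes, so $\dot V = -\dot q^{\top} D(q)\dot q \le 0$ because $D(q)\succ0$. This already yields (local) stability of $(\bar q,0)$.

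For asymptotic stability I would apply LaSalle's invariance principle on a compact sublevel set of $V$: on $\{\dot V=0\}$ one has $\dot q\equiv0$, hence $\ddot q\equiv0$, and then \eqref{eq:dynamics_regulation_ff} forces $\nabla\tilde U(q)=0$, i.e. $K(q)+G(q)=K(\bar q)+G(\bar q)$, which by the isolatedness hypothesis \eqref{eq:isolated} cannot happen in $\mathcal N(\bar q)$ except at $\bar q$. Thus the largest invariant subset of $\{\dot V=0\}$ (intersected with the sublevel set) is $\{(\bar q,0)\}$, and asymptotic stability follows. For the region-of-attraction claim I would pick $c>0$ small enough that the connected component $\Omega_c$ of $\{V<c\}$ containing $(\bar q,0)$ has its $q$-projection compactly contained in $\mathcal N(\bar q)$; then $\Omega_c$ is positively invariant by $\dot V\le0$, and every trajectory starting in $\Omega_c$ converges to $(\bar q,0)$, so $\mathcal N(\bar q)$ — paired with suitable velocities, or, more precisely, replaced by such an $\Omega_c$ — lies inside the basin.

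The core Lyapunov computation is routine once the skew-symmetry identity and $D\succ0$ are available; the delicate point is the region-of-attraction statement. Since $\mathcal N(\bar q)\subseteq\mathbb R^n$ while the basin of attraction lives in $\mathbb R^{2n}$, and since $\tilde U$ need not be radially unbounded \emph{within} $\mathcal N(\bar q)$, a literal reading of ``$\mathcal N(\bar q)$ is included in the region of asymptotic stability'' needs the sublevel-set construction above plus a mild regularity remark (and a clarification of which pairing of positions and velocities is intended). That is where I expect the main effort to go; everything else parallels the textbook analysis of nonlinear PD regulators for rigid robots.
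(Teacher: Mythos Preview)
Your proposal is correct and follows essentially the same approach as the paper: the Lyapunov candidate $V=\tfrac12\dot q^{\top}M\dot q+\tilde U(q)$ you write is exactly the paper's \eqref{eq:lapunov_ff}, and the subsequent use of $\dot q^{\top}(\dot M-2C)\dot q=0$, $D\succ0$, and LaSalle with hypothesis \eqref{eq:isolated} mirrors the paper's argument line by line. Your added care about the region-of-attraction statement (the $\mathbb{R}^n$ vs.\ $\mathbb{R}^{2n}$ mismatch and the sublevel-set construction) is in fact more scrupulous than the paper, which simply asserts the inclusion without further comment.
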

\begin{proof}
Consider as Lyapunov candidate the following generalization of the energy of the robot
\begin{equation}\label{eq:lapunov_ff}
    V(q,\dot{q}) = \underbrace{\frac{1}{2}\dot{q}^{\top} M(q) \dot{q}}_{\text{Kinetic energy}} + \underbrace{U_{\mathrm{G}}(q) - U_{\mathrm{G}}(\bar{q}) + U_{\mathrm{K}}(q) - U_{\mathrm{K}}(\bar{q})}_{\text{Centered potential energy}} + \underbrace{(G(\bar{q}) + K(\bar{q}))^{\top}(\bar{q} - q)}_{\text{Correction term}}.
\end{equation}
The kinetic energy is always strictly positive definite in $\dot{q}$ since $M \succ 0$. Thus, a necessary and sufficient condition for $V$ to be positive definite in $(q,\dot{q})$ is that $V - \dot{q}^{\top} M(q) \dot{q}/2$ is positive definite in $q$, which is equivalent to \eqref{eq:potential_high}.
The next step is to study the sign of the time derivative of \eqref{eq:lapunov_ff}, which is
\begin{equation}\label{eq:V_dot}
    \begin{split}
        \dot{V}(q,\dot{q}) &= \dot{q}^{\top} M(q) \ddot{q} + \frac{1}{2}\dot{q}^{\top} \dot{M}(q) \dot{q} - \dot{q}^{\top} \left((K(\bar{q}) + G(\bar{q})) - (K(q) + G(q)) \right)\\
        &= -\dot{q}^{\top} C(q,\dot{q}) \dot{q} + \dot{q}^{\top}D(q)(-\dot{q}) + \frac{1}{2}\dot{q}^{\top} \dot{M}(q) \dot{q} \\
        &= - \dot{q}^{\top}D(q)\dot{q},
    \end{split}
\end{equation}
where the first step exploits \eqref{eq:dynamics_regulation_ff} to express $M \ddot{q}$, and the second the passivity of the system $\dot{q}^{\top} (\dot{M}(q) - 2 C(q,\dot{q})) \dot{q} = 0$.
Eq. \eqref{eq:V_dot} is only semi-positive definite despite $D(q)$ being a strictly positive matrix, since $\dot{V}(q,0) = 0$ for all $q$.
Thanks to LaSalle's principle, the system converges to the set of $(q,0)$ such that $\ddot{q} = 0$. To conclude the proof it is therefore sufficient to show that $\bar{q}$ is the only configuration in $\mathcal{N}(\bar{q})$ such that $\ddot{q} \neq 0$ for $\dot{q} = 0$, i.e.,
\begin{equation}
	G(q) + K(q) \neq \bar{\tau},
\end{equation}
which thanks to \eqref{eq:feedforward_dynamic} is equivalent to the hypothesis \eqref{eq:isolated}, thus yielding thesis.
\end{proof}

Eq. \eqref{eq:potential_high} is a convexity condition on the total potential energy $U_{\mathrm{G}}(q)+ U_{\mathrm{K}}(q)$. As such, it can be locally checked by looking at the sign of the Hessian matrix. This results in the condition
\begin{equation}\label{eq:stablity_condition_ff}
    \left. \left(\frac{\partial K(q)}{\partial q} + \frac{\partial G(q)}{\partial q}\right) \right\rvert_{q = \bar{q}} \succ 0,
\end{equation}
which is to say that the force field linearized at the desired equilibrium is attractive. In turn, this also implies that the potential force field is locally not constant, therefore implying also that hypothesis \eqref{eq:isolated} is true at least in an infinitesimal neighborhood of $\bar{q}$. Additionally, \eqref{eq:stablity_condition_ff} becomes a necessary condition for \eqref{eq:potential_high} when $\succeq$ is used instead of $\succ$.
The two terms in \eqref{eq:stablity_condition_ff} are the stiffness matrices associated to elastic and potential fields. While the first is always positive definite - see  \eqref{eq:K_pd} - the second is in general not definite in sign.
Gravity may serve either as a destabilizing ($\partial G(q) / \partial q \preceq 0$) or as stabilizing force ($\partial G(q) / \partial q \succeq 0$).  
For the CC segment described in sidebar \ref{sb:CC_derivation}\ref{?}, these two conditions corresponds to the robot pointing upwards ($\phi = \pi$) or downwards ($\phi = 0$) when in straight configuration ($q = 0$) respectively.

Thus, as already pointed out for the analysis of equilibria, the presence of an elastic field makes the control problem simpler to solve compared to the standard rigid case. This can be regarded as an instance of the so-called self stabilization property of soft robots, which has been recognized by several works in the literature \cite{thuruthel2018stable,bosi2016asymptotic}. 
However, even if a feedforward action has proven to be sufficient for stiff enough systems, it is still interesting to consider what happens when a further feedback loop is introduced.
This may serve several purposes, as for example enlarge the basin of attraction, shape the transient, and reject disturbances.
Further following along with the analogy with nonlinear PDs, \eqref{eq:feedforward_dynamic} can be extended as follows for the fully actuated case (Fig. \ref{fig:regulation_tracking_block})
\begin{equation}\label{eq:PD_setpoint}
    \tau(\Bar{q},q,\dot{q}) = \underbrace{K(\Bar{q}) + G(\Bar{q})}_{\text{Feedforward}} + \underbrace{\alpha (\Bar{q} - q) - \beta \dot{q}}_{\text{PD}}. 
\end{equation}
Here, $\alpha, \beta \in \mathbb{R}^{n \times n}$ are two gain matrices weighting the proportional and derivative actions respectively.
\begin{Corollary}\label{cl:fully_feedback}
    The state $(\bar{q},0) \in \mathbb{R}^{2n}$ is an asymptotically stable equilibrium of the closed loop \eqref{eq:dynamics}-\eqref{eq:PD_setpoint} if $D(q) \succ - \beta$, and an open neighbourhood $\mathcal{N}(\bar{q}) \subseteq \mathbb{R}^n$ of $\bar{q}$ exists such that $\forall q \in \mathcal{N}(\bar{q})/\{\bar{q}\}$,  
    \begin{equation}
    \left(U_{\mathrm{G}}(q) + U_{\mathrm{K}}(q)\right)  + \frac{1}{2} (q - \bar{q})^{\top} \alpha (q - \bar{q}) > \left(U_{\mathrm{G}}(\bar{q}) +  U_{\mathrm{K}}(\bar{q})\right) + \left.\left(\frac{\partial}{\partial q} \left(U_{\mathrm{G}}(q) + U_{\mathrm{K}}(q)\right)\right)\right\rvert_{q = \bar{q}}^{\top}\!\!(q - \bar{q}),
    \end{equation}
    and
    \begin{equation}
    G(q) + K(q) + \alpha (\Bar{q} - q) \neq G(\bar{q}) + K(\bar{q}).
    \end{equation}
    These two conditions also imply that $\mathcal{N}(\bar{q})$ is fully included within the region of asymptotic stability of $\bar{q}$.
\end{Corollary}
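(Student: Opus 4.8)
The plan is to transcribe the proof of Theorem~\ref{th:ff}, now carrying along the extra proportional term $\alpha(\bar{q}-q)$ and derivative term $-\beta\dot{q}$ introduced by the feedback in \eqref{eq:PD_setpoint}. First I would substitute \eqref{eq:PD_setpoint} (with $A=I$) into \eqref{eq:dynamics} and rearrange, obtaining
\begin{equation*}
M(q)\ddot{q} + C(q,\dot{q})\dot{q} = \left(K(\bar{q})+G(\bar{q})\right) - \left(K(q)+G(q)\right) + \alpha(\bar{q}-q) - \left(D(q)+\beta\right)\dot{q},
\end{equation*}
which is exactly \eqref{eq:dynamics_regulation_ff} with the elastic/gravitational restoring force augmented by the linear spring $\alpha(\bar{q}-q)$ and the physical damping $D(q)$ replaced by $D(q)+\beta$. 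One may take $\alpha=\alpha^{\top}$ without loss of generality, since only its symmetric part enters the energy balance that follows.

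Next I would use, in place of \eqref{eq:lapunov_ff}, the Lyapunov candidate obtained by adding to it the energy stored in the virtual spring $\alpha$,
\begin{equation*}
V(q,\dot{q}) = \frac{1}{2}\dot{q}^{\top}M(q)\dot{q} + U_{\mathrm{G}}(q) - U_{\mathrm{G}}(\bar{q}) + U_{\mathrm{K}}(q) - U_{\mathrm{K}}(\bar{q}) + \left(G(\bar{q})+K(\bar{q})\right)^{\top}(\bar{q}-q) + \frac{1}{2}(q-\bar{q})^{\top}\alpha(q-\bar{q}).
\end{equation*}
Because $M\succ0$ makes the kinetic term positive definite in $\dot{q}$, $V$ is positive definite on $\mathcal{N}(\bar{q})$ if and only if the sum of the remaining terms is positive definite in $q$; this is precisely the first displayed inequality of the statement, i.e. the analogue of the convexity condition \eqref{eq:potential_high} for the shifted potential $q\mapsto U_{\mathrm{G}}(q)+U_{\mathrm{K}}(q)+\frac{1}{2}(q-\bar{q})^{\top}\alpha(q-\bar{q})$.

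Then I would differentiate $V$ along the closed loop exactly as in \eqref{eq:V_dot}. Substituting the closed-loop expression for $M\ddot{q}$ and invoking the passivity identity $\dot{q}^{\top}(\dot{M}(q)-2C(q,\dot{q}))\dot{q}=0$, the $K+G$ cross-terms cancel as in the feedforward case, and the two new terms $\dot{q}^{\top}\alpha(\bar{q}-q)$ (coming from $M\ddot{q}$) and $\dot{q}^{\top}\alpha(q-\bar{q})$ (coming from the virtual-spring energy) cancel each other, leaving
\begin{equation*}
\dot{V}(q,\dot{q}) = -\dot{q}^{\top}\left(D(q)+\beta\right)\dot{q},
\end{equation*}
which is negative semidefinite precisely under the hypothesis $D(q)\succ-\beta$ and vanishes only on $\{\dot{q}=0\}$. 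A LaSalle argument then confines every trajectory started in $\mathcal{N}(\bar{q})$ to the largest invariant subset of $\{\dot{q}=0\}$; there $\ddot{q}=0$, so the closed-loop equation reduces to the closed-loop force balance $K(q)+G(q)+\alpha(q-\bar{q})=K(\bar{q})+G(\bar{q})$, which the second hypothesis of the statement excludes for every $q\in\mathcal{N}(\bar{q})\setminus\{\bar{q}\}$. Hence $(\bar{q},0)$ is asymptotically stable, and since $V$ is positive definite and nonincreasing with $\bar{q}$ as its only stationary point in $\mathcal{N}(\bar{q})$, that neighbourhood lies in its region of attraction.

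As in Theorem~\ref{th:ff}, the only genuinely delicate point is that $\dot{V}$ is merely negative semidefinite, so asymptotic (as opposed to plain Lyapunov) stability hinges entirely on the LaSalle step combined with the non-degeneracy condition isolating $\bar{q}$; a secondary technicality is checking that $\mathcal{N}(\bar{q})$ can be taken to be, or shrunk to, a sublevel set of $V$ so that solutions cannot escape before the invariance principle applies. Everything else is the feedforward argument verbatim, with $K+G$ playing the role of $K(q)+G(q)+\alpha(q-\bar{q})$ and $D(q)$ that of $D(q)+\beta$.
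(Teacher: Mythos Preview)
Your proposal is correct and follows essentially the same route as the paper's proof: write the closed loop as the feedforward closed loop with the restoring force augmented by $\alpha(\bar q-q)$ and the damping by $\beta$, add $\tfrac{1}{2}(q-\bar q)^{\top}\alpha(q-\bar q)$ to the Lyapunov candidate \eqref{eq:lapunov_ff}, compute $\dot V=-\dot q^{\top}(D(q)+\beta)\dot q$, and finish via LaSalle exactly as in Theorem~\ref{th:ff}. Your write-up is in fact more detailed than the paper's (which simply states that the proof of Theorem~\ref{th:ff} generalizes), and your remarks on the symmetry of $\alpha$ and on shrinking $\mathcal{N}(\bar q)$ to a sublevel set are sound refinements.
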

%
%
\begin{proof}
	The closed loop dynamics is $M(q)\ddot{q} + C(q,\dot{q})\dot{q} = \left(K(\bar{q}) - K(q)\right) + (G(\bar{q}) - G(q)) + \alpha (\bar{q} - q) - (D(q) + \beta)\dot{q}$.
    The previously discussed proof generalizes to this case by adding $(\bar q - q)^{\top} \alpha (\bar q - q)/2$ to \eqref{eq:lapunov_ff}.
    The time derivative of this new Lyapunov candidate is $\dot{V} = - \dot{q}^{\top}\left( D(q) + \beta \right)\dot{q}$, which is semi-negative definite if $D(q) + \beta \succ 0$. So any $\beta \succeq 0$ implements a damping injection that does not destabilize the closed loop. The rest of the proof follows as in the feedforward case.
\end{proof}

The sufficient condition for local asymptotic stability is 
\begin{equation}\label{eq:stablity_condition_fb}
    \left. \left(\frac{\partial K(q)}{\partial q} + \frac{\partial G(q)}{\partial q}\right) \right\rvert_{q = \bar{q}} + \alpha \succ 0,
\end{equation}
which becomes necessary when only semi-positiveness is required.
Note that \eqref{eq:stablity_condition_fb} can always be fulfilled through a large enough proportional gain $\alpha$. Yet, large gains may result in a stiffening of the soft robot \cite{della2017controlling}, and in amplification of noise or excitation of neglected dynamics.  
%
Possibly nonlinear integral actions can also be added to \eqref{eq:PD_setpoint} for compensating steady state errors and achieve global stabilization, as discussed in \cite{kelly1998global}.

\subsection{Trajectory tracking}

In trajectory tracking the desired behavior is specified as an evolution of the full robot shape in time.
Consider a twice differentiable function of time $\Bar{q}: \mathbb{R} \rightarrow \mathbb{R}^{n}$, then the control goal is to find a control strategy $\tau$ such that
\begin{equation}
    \lim_{t \rightarrow \infty} (q(t),\dot{q}(t)) - (\bar{q}(t),\dot{\bar{q}}(t))= 0.
\end{equation}
Usually the reference is considered bounded in norm $||(\bar{q}(t),\dot{\bar{q}}(t))|| < c_{\mathrm{t}}$, for some positive $c_{\mathrm{t}}$.
In theory, under the fully actuated approximation $n = m$, \eqref{eq:dynamics} can be completely feedback linearized with a computed torque scheme. However, such a strategy would be hardly applicable on a real system. 
This section will focus on controllers achieving the trajectory tracking goal by relying minimally on direct model cancellations. For the sake of space, proof of convergence will not be provided. All of them can be obtained by adapting proofs from the nonlinear PD literature so to work for a system as \eqref{eq:dynamics_regulation_ff}, similarly as what it has been shown in Theorem \ref{th:ff}.

If the reference trajectory is slow varying (i.e. $||\dot{\Bar{q}}||$ small enough) then \eqref{eq:feedforward_dynamic} and \eqref{eq:PD_setpoint} can still be applied as they are, possibly with the inclusion of damping feedforward compensation terms - i.e., $D(\bar{q})\dot{\bar{q}}$ and $(D(\bar{q}) + \beta)\dot{\bar{q}}$ respectively. The state will not converge to $(\bar{q},\dot{\bar{q}})$ at steady state, but to a neighborhood of it 
\cite{kawamura1988local,cervantes2001pid}. Higher the gains and slower the reference, smaller is the neighborhood.

Explicit compensation of dynamic forces is needed for achieve null steady state error. Again, this can happen by largely relying on feedforward actions (Fig. \ref{fig:regulation_tracking_block})
\begin{equation}\label{eq:PD_trajectory}
    \tau(\Bar{q},\dot{\Bar{q}},\ddot{\Bar{q}},q,\dot{q}) =
    \underbrace{M(\bar{q})\ddot{\bar{q}} + C(\bar{q},\dot{\bar{q}})\dot{\bar{q}} + D({q})\dot{\bar{q}} + K(\Bar{q}) + G(\Bar{q})}_\text{Feedforward}
     + \underbrace{\alpha (\Bar{q} - q) + \beta (\dot{\bar{q}} - \dot{q})}_\text{PD}. 
\end{equation}
By adapting the results in \cite{kelly1994pd}, we can prove that \eqref{eq:PD_trajectory} leads to local exponential stabilization of the desired trajectory if $\alpha + \partial K/\partial q$ and $\beta + D$ are larger than two bounds which increase with the increase of $||\partial G/\partial q||$, $||\dot{\bar{q}}||$, and $||\ddot{\bar{q}}||$.
Purely feedforward dynamic controllers for soft robots are experimentally validated in \cite{marchese2016dynamics,falkenhahn2015model}.
Thus, if the reference is slowly-varying or the natural impedance is high enough, then the feedback gains $\alpha, \beta$ can be selected null and the controller is once more purely feedforward. Yet, for generic trajectories the discussed condition will be hardly verified by the physical properties and the extra feedback will be necessary.
As an alternative to high gains, Eq. \eqref{eq:PD_trajectory} can be further evolved into a partially nonlinear closed loop. This is done by evaluating $M$, $C$, and $G$ on the measured state $(q,\dot{q})$ rather than on the reference $(\bar{q},\dot{\bar{q}})$. This produces a closed loop which is equivalent to a rigid robot controlled through a PD+ controller \cite{paden1988globally}. In this case we can achieve perfect tracking even when $\alpha = 0$ and $\beta = 0$. This control strategy is discussed and experimentally validated on a soft robotic platform in \cite{della2020model}. A robust version of this controller which does not require direct measurement of $\dot{q}$ is proposed and tested in \cite{cao2021model}, which is close to the robust PD controller proposed in \cite{ouyang2014pd}.
Thanks to \eqref{eq:dynamics_regulation_ff}, latest advancements in PD control of mechanical systems \cite{su2009global,su2014finite,zhang2017pid} can be also adapted to further improve performances and robustness.

Several other works in the soft robotics literature deal with the trajectory tracking challenge by relying on fully actuated approximations. A computed torque controller built on a planar PCC model, and a sliding mode variation are experimentally tested in \cite{kapadia2014empirical}. A model based decentralized controller is proposed in \cite{doroudchi2021configuration}, and applied to a fully actuated discretization of the Cosserat model. Finally, \cite{george2020first} derives model based controllers under a first order approximation - i.e. when $||D(q)\dot{q}|| >> ||M(q)\ddot{q} + C(q,\dot{q})\dot{q}||$. This is for example the case of lightweight soft robots moving in viscous fluids \cite{milana2019artificial}. 

\section{Advanced Control Challenges: \\ Underactuation, Actuators Dynamics, and Task Execution}
%
%
This section discusses challenges that 
are largely unexplored, and still in need of general solutions and formulations. 

\subsection{Dealing with undearctuation in shape control}

Fully actuated approximations have proven to be effective in practice, despite being a clear over-simplification of th e control problem. 
By bringing under-actuation into the picture, the degrees of freedom not directly affected by the control action can be analyzed and potentially used in the design of the controller, towards solutions with improved performance and certifiable reliability.
Thus, consider a non-square actuation matrix $A(q)$ with $m < n$.
The first difficulty that arises is that the desired shape $\bar{q}$ may not be an attainable equilibrium of the system, i.e. $K(\Bar{q}) + G(\Bar{q}) \notin \mathrm{Span}(A(\bar{q}))$. In other terms, for a generic shape it will not exists a control action which makes it an equilibrium configuration. Similarly, in general it will not necessarily exist a control input evolution $\tau(t)$ such as a generic state $(\bar{q},\dot{\bar{q}})$ can be reached from any initial condition. Authors of \cite{zheng2019controllability} discuss how different actuation patterns may affect the accessible set \cite{sontag1995linear} of a soft robot.

Let us assume that the equilibrium $\bar{q}$ is attainable with the given under-actuation matrix $A(q)$.
Under this assumption, then \eqref{eq:feedforward_dynamic} can be generalized in (Fig. \ref{fig:regulation_tracking_block})
\begin{equation}\label{eq:ff_under}
	\tau = A^{\mathrm{L}}(\bar{q})(K(\Bar{q}) + G(\Bar{q})),
\end{equation}
with $A^{\mathrm{L}}$ left-inverse of $A$, as for example the Moore-Penrose pseudoinverse $\left(A^{\top} A\right)^{-1} A^{\top}$. If $A$ is configuration independent, this leads to the same closed loop equation \eqref{eq:dynamics_regulation_ff}. Thus the physical impedance acts as a stabilizing action not only on the collocated part, but also on the variables which are not directly reached by the actuation. If $A$ is configuration dependent then its local changes may have destabilizing effects that must be considered in a modified Eq. \eqref{eq:stablity_condition_ff}, as discussed in the appendix of \cite{della2019exact}.
When dealing with slowly varying trajectories, similar considerations can be applied to the trajectory tracking problem. However, extending the results involving feedback actions - as for example \eqref{eq:PD_trajectory} - is a substantially more complex challenge that is still to be addressed. 
Relying on linearized models can be a practically effective alternative, either when linearizing around the equilibrium \cite{thieffry2019trajectory} or around the desired trajectory \cite{thieffry2020lpv}. 

Control design and analysis get substantially more complex when it comes to stabilizing unstable equilibria of underactuated models. In this case, \eqref{eq:stablity_condition_ff} is not verified, and feedback actions must be necessarily involved. 
A discussion and experimental validation on combining local linear control, an accurate FEM model, and a Luenberger Observer, for designing a damping injection loop is provided in \cite{thieffry2018reduced,thieffry2018control}. A FEM-Based Gain-Scheduling Controller is used in \cite{wu2021fem} to cover the state space of the robot with linear set-point regulators including integral actions.
Moving a step towards the nonlinear domain, the simple controller \eqref{eq:PD_setpoint} can be extended to (Fig. \ref{fig:regulation_tracking_block})
\begin{equation}\label{eq:PD_ua}
    \tau(\bar{q},q,\dot{q}) = A^{\mathrm{L}}(K(\Bar{q}) + G(\Bar{q})) + \alpha A^{\top} (\Bar{q} - q) - \beta A^{\top} \dot{q}, 
\end{equation}
which is a generalization of \eqref{eq:PD_setpoint} to the underactuated domain.
Note that the two gains $\alpha,\beta$ are still elements of $\mathbb{R}^{m \times m}$, and thus they weight the involvement of the actuators into the control loop. 
\begin{Corollary}
	thesis of Corollary \ref{cl:fully_feedback} is verified for the closed loop \eqref{eq:dynamics}-\eqref{eq:PD_ua}, with constant $A$, if the same set of hypotheses obtained is verified when formally switching $\alpha$ and $\beta$ with $A \alpha A^{\top}$ and $A \beta A^{\top}$ respectively, and if
	\begin{equation}\label{eq:atteinable_equilibrium}
		(I - A A^{\mathrm{L}}) (K(\Bar{q}) + G(\Bar{q})) = 0.
	\end{equation}
\end{Corollary}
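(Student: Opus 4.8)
The plan is to show that, under the stated hypotheses, the closed loop \eqref{eq:dynamics}-\eqref{eq:PD_ua} collapses to \emph{exactly} the ODE analysed in the proof of Corollary \ref{cl:fully_feedback}, but with the gain matrices $\alpha,\beta$ replaced by $A\alpha A^{\top}$ and $A\beta A^{\top}$; once this is established, the Lyapunov/LaSalle argument already given transfers verbatim and there is nothing left to do.

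First I would substitute \eqref{eq:PD_ua} into \eqref{eq:dynamics}. Since $A$ is constant it passes through the closed-loop manipulations, so the input term reads
\begin{equation}
A\tau = A A^{\mathrm{L}}\bigl(K(\bar q)+G(\bar q)\bigr) + A\alpha A^{\top}(\bar q - q) - A\beta A^{\top}\dot q .
\end{equation}
The attainability condition \eqref{eq:atteinable_equilibrium} is equivalent to $A A^{\mathrm{L}}\bigl(K(\bar q)+G(\bar q)\bigr) = K(\bar q)+G(\bar q)$, i.e.\ the feedforward component reproduces the full elastic-plus-gravity action evaluated at $\bar q$, just as the collocated feedforward \eqref{eq:feedforward_dynamic} does in the fully actuated case. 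Rearranging then yields
\begin{equation}
M(q)\ddot q + C(q,\dot q)\dot q = \bigl(K(\bar q)-K(q)\bigr) + \bigl(G(\bar q)-G(q)\bigr) + A\alpha A^{\top}(\bar q - q) - \bigl(D(q)+A\beta A^{\top}\bigr)\dot q ,
\end{equation}
which is formally identical to the closed loop written at the start of the proof of Corollary \ref{cl:fully_feedback} after the substitution $\alpha\mapsto A\alpha A^{\top}$, $\beta\mapsto A\beta A^{\top}$.

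Next I would invoke Corollary \ref{cl:fully_feedback} with these substituted gains. The Lyapunov candidate is \eqref{eq:lapunov_ff} augmented by $\tfrac12(q-\bar q)^{\top}A\alpha A^{\top}(q-\bar q)$; its positive definiteness is precisely the convexity hypothesis of Corollary \ref{cl:fully_feedback} with $\alpha$ replaced by $A\alpha A^{\top}$, and the passivity identity $\dot q^{\top}(\dot M - 2C)\dot q = 0$ makes the cross terms vanish exactly as before, giving $\dot V = -\dot q^{\top}\bigl(D(q)+A\beta A^{\top}\bigr)\dot q$, negative semidefinite under $D(q)\succ -A\beta A^{\top}$. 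LaSalle then forces convergence to $\dot q=0,\ \ddot q=0$, and the isolated-equilibrium hypothesis — now reading $G(q)+K(q)+A\alpha A^{\top}(\bar q - q)\neq G(\bar q)+K(\bar q)$ on $\mathcal{N}(\bar q)\setminus\{\bar q\}$ — singles out $\bar q$. One should also check that $\bar q$ is genuinely an equilibrium: at $q=\bar q,\ \dot q=0$ the residual force is $K(\bar q)+G(\bar q)-A A^{\mathrm{L}}\bigl(K(\bar q)+G(\bar q)\bigr)$, which vanishes precisely by \eqref{eq:atteinable_equilibrium}.

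The only step needing care — and the only place the hypotheses bite — is this reduction of the feedforward term: without \eqref{eq:atteinable_equilibrium} a residual $(I - A A^{\mathrm{L}})\bigl(K(\bar q)+G(\bar q)\bigr)$ survives, $\bar q$ is no longer an equilibrium, and the construction fails. Equally, one must remember that $A\alpha A^{\top}$ has rank $m<n$, so — unlike in the fully actuated case, where a large $\alpha$ always yields \eqref{eq:stablity_condition_fb} — the substituted convexity hypothesis is a genuine restriction on the natural impedance in the directions transverse to $\mathrm{Span}(A)$, which is exactly why the corollary is phrased as "the same hypotheses with $\alpha,\beta$ replaced by $A\alpha A^{\top},A\beta A^{\top}$" rather than as an unconditional statement. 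Beyond these observations the argument is routine, so I would simply point to the proof of Corollary \ref{cl:fully_feedback} instead of reproducing it.
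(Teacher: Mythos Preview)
Your proposal is correct and follows essentially the same approach as the paper's proof: use \eqref{eq:atteinable_equilibrium} to rewrite $A A^{\mathrm{L}}(K(\bar q)+G(\bar q))$ as $K(\bar q)+G(\bar q)$, derive the closed-loop equation, observe it is structurally identical to that of Corollary~\ref{cl:fully_feedback} with $\alpha\mapsto A\alpha A^{\top}$ and $\beta\mapsto A\beta A^{\top}$, and then defer to that earlier argument. Your write-up is more detailed---explicitly verifying that $\bar q$ is an equilibrium and remarking on the rank deficiency of $A\alpha A^{\top}$---but the underlying reduction is the same.
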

\begin{proof}
	Under hypothesis \eqref{eq:atteinable_equilibrium}, the following holds $A A^{\mathrm{L}}(K(\Bar{q}) + G(\Bar{q})) = K(\Bar{q}) + G(\Bar{q})$.
	The closed loop dynamics is thus structurally equivalent to the one in Corollary \ref{cl:fully_feedback}, i.e., $M(q) \ddot{q} + C(q,\dot{q})\dot{q} = \left(K(\bar{q}) - K(q)\right) + (G(\bar{q}) - G(q)) + A \alpha A^{\top} (\bar{q} - q) - (D(q) + A \beta A^{\top})\dot{q}$. Thus, the rest of the proof follows as in the fully actuated case.
\end{proof}
The sufficient convergence condition becomes
\begin{equation}\label{eq:stablity_condition_fb_und}
\left. \left(\frac{\partial K(q)}{\partial q} + \frac{\partial G(q)}{\partial q}  + A \alpha A^{\top} \right) \right\rvert_{q = \bar{q}} \succ 0,
\end{equation}
where if $\alpha \succ 0$ then $A \alpha A^{\top} \succeq 0$ but $\mathrm{Rank} \left(A \alpha A^{\top} \right) \leq m < n$. 
Thus, the equilibrium $\bar{q}$ can be stabilized using \eqref{eq:PD_ua} only if the actuation is collocated on the directions in which the effective stiffness loses rank.
Other recent works deal with the regulation of equilibria under similar collocated conditions.
In \cite{franco2021energy} energy shaping controller is proposed for set-point posture regulation one planar segment modeled as a sequence of rigid links, with the same torque applied to all links. 
Moving to more general systems, \cite{boyer2020dynamics} tests in simulation the use of computed torque plus zero-dynamics damping injection in a geometrical exact discrete Cosserat model. This technique was already used for controlling a eel-like hyper-redundant robot in \cite{boyer2006macro}. No proof of convergence is provided, but simulations show good performance.

If also \eqref{eq:stablity_condition_fb_und} cannot be verified, then the problem must be analyzed using less local strategies. 
For example, if the left hand of \eqref{eq:stablity_condition_fb_und} is only semi-positive definite, then the extended version of the Lyapunov function \eqref{eq:lapunov_ff} may still be positive definite.
If however this term is not definite in sign for all $\alpha$, then there are directions on which the actuators is not acting directly, and for which the potential field $K(q) + G(q)$ is repulsive. In this case, stabilization must occur by relying on dynamic couplings.
This is largely an unexplored ground in soft robotics. A very first step in this direction is discussed in \cite{della2020soft}, where a soft inverted pendulum is introduced as an a soft extension of the acrobot \cite{spong1994partial}. The stabilization of an unstable equilibrium is discussed analytically, and it is shown that there is a range of low stiffnesses for which the robot can be stabilized only by means of non-collocated feedback.

\subsection{Actuators dynamics and constraints}

Actuators dynamics plays a much important role in shaping the soft robot behavior, especially if compared to classic rigid robots. Nonetheless, few are the works so far that have explicitly taken into account a dynamics formulation as \eqref{eq:dynamics_a_act} in the design of the controller.
Some actuation technologies require already to accurately consider the control problem for a single isolated actuator. This is the case of electro-thermally-active materials \cite{rizzello2016robust,gu2017survey,van2019closed,copaci2020sma}, and of magnetic actuation of micro and nano robots \cite{pedram2019optimal,jeon2019magnetically}.
If a clear separation exists between the response time of actuators \eqref{eq:dynamics_a_act} and the robot \eqref{eq:dynamics_r_act}, then singular perturbation approach \cite{siciliano1988singular} could be used to improve the performance of the model based controllers introduced above. 
Alternatively, backstepping design achieves the same goal without any assumption on the relative time scales \cite{bridges1995contril}, but at the cost of a more complex control architecture. 
Both techniques have been extensively used to control flexible robots actuated with similar modalities as typically found in soft robotics, as tendon driven \cite{khosravi2014dynamic}, pistons \cite{taheri2014force}, and artificial muscles \cite{asl2017adaptive,carbonell2001nonlinear}.
Nonetheless, the only example of application in soft robotics that we are aware of is a backstepping controller for a single segment approximated with a linear model of the robot and of the air flow \cite{wang2019parameter}.

In soft robotic actuation, it is often the case that the input space can only take values in a subset of $\mathbb{R}^{m}$. This may be due to upper bounds to the maximum force, and to unilateral constraints induced by tendons that can only pull, or pressure chambers that can only push. 
These constraints are usually dealt with heuristics which mask their existence to controllers carefully tuned to not exceed the limits of actuation. As an alternative to heuristics, the masking can also be devised through model based techniques as closed form solution of optimal control allocation problems \cite{della2019dynamic}.
Alternatively, Model Predictive Controllers (MPC) can generate control actions that inherently verify the constraints. 
In \cite{best2016new} linear MPC is used to control a pneumatically actuated humanoid robot, with joint-like localized bending and under a decentralized approximation. 
In \cite{hyatt2020real} the strategy is extended to nonlinear MPC, and Evolutionary algorithms are used to solve the nonlinear optimization. 
In \cite{tonkens2020soft} nonlinear order reduction techniques are used to generate accurate relaxations of a nonlinear finite horizon optimal control problem, including state and input constraints, and formulated on nonlinear FEM models.

\begin{figure}[th!]
	\centering
	\includegraphics[width=\textwidth]{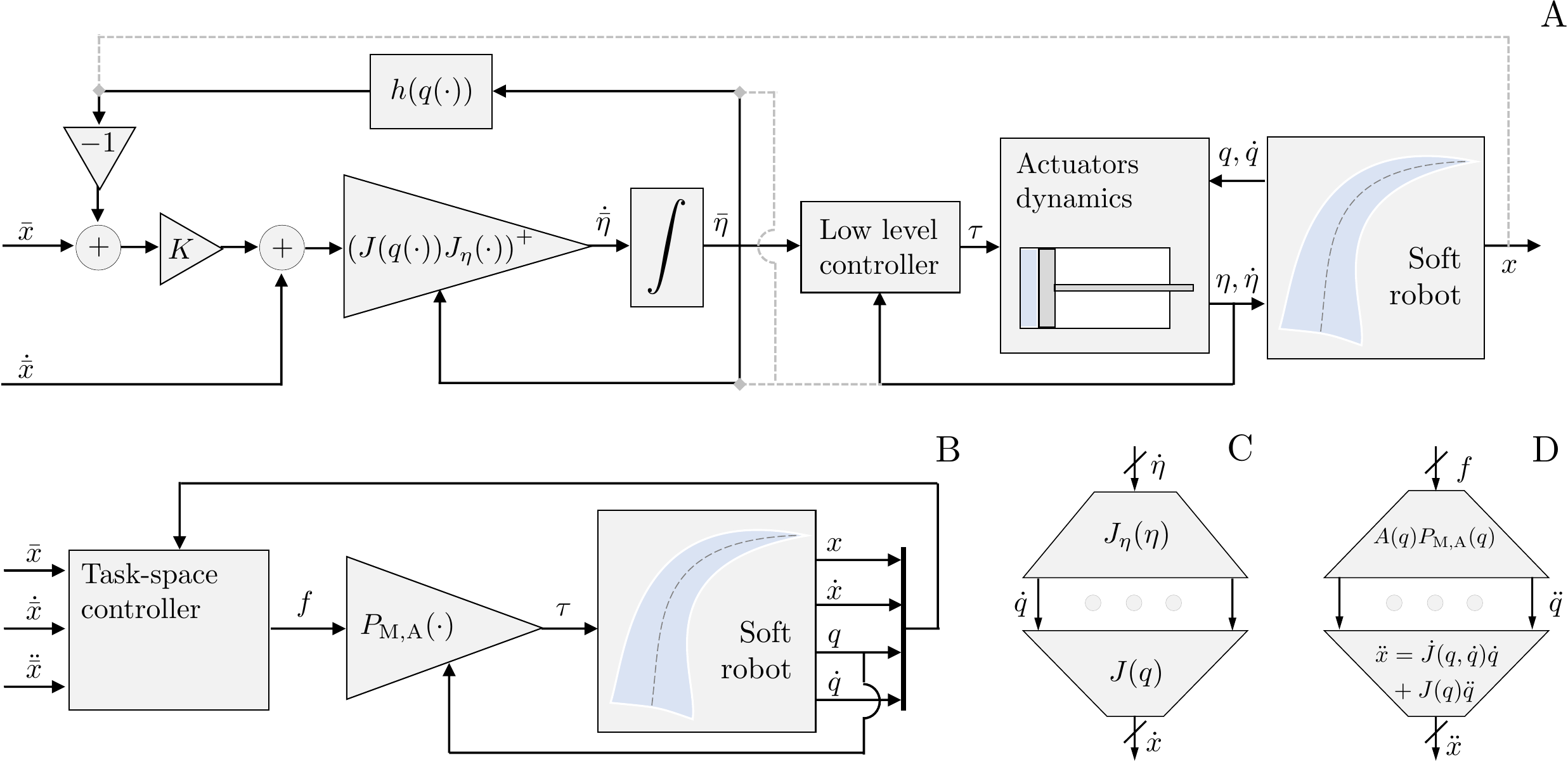}
	\caption{\small  Block schemes of task space controllers in the underactuated case. Panel A shows the standard approach - Eq. \eqref{eq:kinematic_control_under} - which deals with the problem under quasi-static and actuator dominance approximations. Panel B shows a theoretically attractive but potentially unrobust alternative, which acts directly in task space \eqref{eq:ee_dynl}. Both solutions deal with configuration space underactuation by constructing control spaces that are at least as large as the output, but in general, smaller than the configuration. These are actuator side velocities $\dot\eta$ (Panel C) for the first strategy and task level forces $f$ (Panel D) for the second.
	\label{fig:task_block}}
\end{figure}

\subsection{Task space regulation and tracking}

The task space of a robot is usually identified with the configuration of its end effector. In soft robots this corresponds to the configuration of the tip $x(1,t) = h(1,q(t))$. 
For simplicity of notation we will drop the $s$ coordinate in this section. This also allows to stress that the results that we discuss below are general for any $s$ and even for any smooth function $h$ of the configuration $q$. Examples are the potential energy, or the distance of the soft robot from an obstacle.
Thus, we say that a task is fulfilled if 
\begin{equation}\label{eq:task_goal}
	\lim\limits_{t \rightarrow \infty} h(q(t)) - \bar{x}(t) = 0,
\end{equation}
where the desired task coordinates $\bar{x}$ can be either a constant value (regulation) or a function of time (tracking). 
%

A substantial body of literature \cite{bailly2005modeling,camarillo2009configuration,webster2010design,mahl2014variable,bern2017fabrication,bieze2018finite,chikhaoui2018toward} deals with the problem under the kinematic approximation. For a fully actuated model, this means assuming that the robot evolution is described by \eqref{eq:kinematics}, with $\dot{q}$ being the control input. This is a well known problem in robotics \cite{chiacchio1991closed,buss2004introduction,chiaverini2016redundant}, which can be solved with the control loop
\begin{equation}\label{eq:kine_control_fa}
    \dot{q} = J^{+}(q)\left(K_{\mathrm{e}} \left(\bar{x} - h(q)\right) + \dot{\bar{x}}\right),
\end{equation}
with $J^{+}$ being the Moore-Penrose pseudo-inverse of $J$. 
Indeed, combining \eqref{eq:kinematics} and \eqref{eq:kine_control_fa} yields the closed loop dynamics $\mathrm{d}({x} - {\bar{x}})/\mathrm{d} t = K_{\mathrm{e}} ({x} - {\bar{x}})$ that fulfills \eqref{eq:task_goal} exponentially fast for all $K_{\mathrm{e}} \succ 0$.
Note that for $\dot{\bar{x}} = 0$, the time discretization \cite{benhabib1985solution,falco2011stability} of \eqref{eq:kine_control_fa} is equivalent to applying the Newton-Raphson method to solve the following quadratic programming problem
\begin{equation}\label{eq:kine_control_fa_opt}
    \min_{q \in \mathbb{R}^n} \;\; ||h(q) - \bar{x}||^2_2.
\end{equation}
Soft and hard constrains can be explicitly included in \eqref{eq:kine_control_fa_opt}, and possibly reflected in \eqref{eq:kine_control_fa} using multi-task prioritization.
In practice, \eqref{eq:kine_control_fa} is integrated numerically, and the result serves as reference $\bar{q}$ for a low level controller which regulates $q$. This can happen entirely in feedforward or as a high level feedback loop. In the latter case, $q$ and $h(q)$ are directly measured. Alternatively, the kinematic behavior can be forced on the system using based cancellations \cite{kapadia2011task}. Therefore, the use of a kinematic controller implicitly lies on the assumption that all configurations $q$ are attainable through a low level controller as the ones discussed in previous sections. 

To extend \eqref{eq:kine_control_fa} to the underactuated case, one has to introduce some extra assumptions. 
First, it must be assumed that a low level feedback loop $\tau(\bar{\eta},\eta,\dot{\eta},q,\dot{q})$ is available such that if applied to \eqref{eq:dynamics_a_act} then $\eta$ converges to $\bar{\eta}$ fast enough. Under this assumption $\eta$ and $\bar{\eta}$ can be used interchangeably.
This is a strong assumption in general. However, if the robot dynamics is negligible compared to the actuators one - e.g., lightweight robot with strongly reduced actuation - standard actuator-side regulation $\tau(\bar{\eta},\eta,\dot{\eta})$ is sufficient. This is for example the case of lightweight continuum medical devices \cite{burgner2015continuum,da2020challenges}.
Second, it has to be assumed that the robot is drawn to a stable equilibrium $\bar{q}$, whenever a constant $\eta$ is imposed. This is equivalent to say that the feedforward action \eqref{eq:ff_under} generates a stable equilibrium. See the previous subsection for more discussions on the topic.
Third, a one-to-one map must exist from $\bar{\eta}$ to $\bar{q}$, which we refer to as $q(\eta)$. 
We call $J_{\mathrm{\eta}}(\eta)$ the Jacobian of this map.

If these three hypotheses are simultaneously verified, then a differential kinematic model can be constructed, which goes directly from actuators space $\eta$ to task space $x$
\begin{equation}\label{eq:kinematic_control_under}
	\dot{x} = \lefteqn{\overbrace{\phantom{J(q(\eta)) J_{\mathrm{\eta}}(\eta)}}^{\text{Ent-to-end Jacobian}}}
	J(q(\eta)) \underbrace{J_{\mathrm{\eta}}(\eta) \dot{\eta}}_{\dot{q}}.
\end{equation}
%
This is formally equivalent to \eqref{eq:kinematics} from a mathematical standpoint. Thus, a kinematic controller can be constructed by following the same line of reasoning of \eqref{eq:kine_control_fa}, resulting in the control action (Fig. \ref{fig:task_block} A)
\begin{equation}\label{eq:kine_control_ua}
    \dot{\eta} = (J(q(\eta)) J_{\mathrm{\eta}}(\eta))^{+}\left(K_{\mathrm{e}} \left(\bar{x} - h(q(\eta))\right) + \dot{\bar{x}}\right).
\end{equation}
This formulation is quite powerful since $J(\eta)J_{\mathrm{\eta}}(q(\eta))$ is in general full rows rank even if $J_{\mathrm{\eta}}(q(\eta))$ is a strongly higher rectangular matrix (strong underactuaiton of the state), as soon as the dimension of $x$ is smaller or equal than $m$. This is for example the case of a long soft tentacle (Fig. \ref{fig:rod_models}) being actuated with three tendons, and controlled to reach a goal location with the tip. This condition is visually represented by Fig. \ref{fig:task_block} C.
Finally, it is worth underlying that similar steps can be followed by bypassing the actuators models, and directly reasoning on \eqref{eq:dynamics}. This can be achieved by focusing on $\tau$ rather than on $\eta$.
In this case $J_{\mathrm{a}}$ can be derived from \eqref{eq:equilibium_nog}. 
A similar loop as \eqref{eq:kine_control_ua} can thus be used to evaluate the control action in \eqref{eq:ff_under}.

Several variations on the kinematic inversion strategies has been proposed in the literature. The Cosserat kinematic model is combined with linearized task space control in \cite{campisano2021closed}, and with sliding mode control in \cite{rucker2013sliding,alqumsan2019robust}. Visual servoing based kinematic PCC model, where the camera looks the robot, is used to devise the closed loop \cite{xu2021visual}. The inverse kinematics problem is tackled for parallel soft robots by relying on rigid link discretization in \cite{altuzarra2019position}, on FEM models in \cite{zhang2017visual}, and on Cosserat parallel kinematics in \cite{till2015efficient,black2017parallel}.


As an alternative to the many assumptions required by the kinematic approximation, task space control of under-actuated dynamic models can be directly embedded in the dynamic controller by relying on the operational space formulation \cite{khatib1987unified,della2019exact}. As for classic rigid robots, this can be done by differentiating one more time \eqref{eq:kine_control_fa},
and combining the result with \eqref{eq:dynamics_regulation_ff}. Algebraic manipulations yield the operational or task space dynamics
\begin{equation}\label{eq:ee_dynl}
 	\underbrace{\Lambda (q) \, \ddot{x} + \eta(q,\dot{q})  + J^{+\top}_{M}(q) \, (G(q) }_{\text{Terms commonly found in rigid robots}}+ K(q) + D(q) \dot{q}) = J^{+\top}_{M}(q) A(q) \tau,
\end{equation}
where the inertia matrix in the task space is $\Lambda = (J M^{-1} J\mathrm{^\top})^{-1} \in \mathbb{R}^{m\times m}$, Coriolis and centrifugal terms are collected in $\eta(q,\dot{q}) = \Lambda (J M^{-1}C - \dot{J})\dot{q}$, and $J^{+}_{M} = M^{-1} J\mathrm{^\top} \Lambda \in \mathbb{R}^{n \times m}$ is the dynamically consistent pseudo\--inverse.
Eq. \eqref{eq:ee_dynl} resembles the task space dynamics of a rigid robot, with two differences: the robot's impedance $K(q) + D(q) \dot{q}$ and the task space input field $J^{+\top}_{M}(q) A(q)$. The former does not introduce major differences since in any case the integrability of the potential field is lost in task coordinates.
The latter can be solved in general, since it admits the following right hand side inverse  
\begin{equation}
    P_{\mathrm{M,A}}(q) = \left(J(q) M^{-1}(q) A(q)\right)^{-1} J(q) M^{-1}(q) ,
\end{equation}
for all configurations $q$ such that $J(q) M^{-1}(q) A(q)$ is full rank \cite{della2019exact}.
As a result $\tau = P_{\mathrm{M,A}}(q) J^{\top}(q) f$ generates a fully actuated task space dynamics. Thus, direct extensions of standard operational space controllers \cite{nakanishi2008operational} can be used to ensure that \eqref{eq:task_goal} holds for the full dynamic model \eqref{eq:dynamics} and possibly in presence of strong underactuation ($m << n$). This control strategy is depicted in Figs. \ref{fig:task_block} B,D.
Note however that this is not sufficient to ensure that the full state $(q,\dot{q})$ converges to a steady state. How to design a provably stable task space dynamic controller in presence of underaction remains therefore an open problem. This is a challenge which is far from being solved also for classic articulated robots \cite{dietrich2019hierarchical}.

\subsection{Interaction with the environment}

Due to their inherent compliance, soft robots promise to revolutionize how robotic systems interact with the environment by bringing into the picture a new level of safety and robustness compared to standard rigid robots. Yet, most of the works on soft robot control deal with soft robots moving in free space. Moreover, planning algorithms are usually devised so to explicitly avoid any interaction with the environment \cite{roesthuis2016steering,godage2012path,marchese2014whole,della2019dynamic}. 
In practice, the controllers discussed above appear to work well also when interactions with a passive environment occur. Yet, the literature analyzing interactions between the robot and an unstructured environment from a model based perspective is limited. 

Assume that the soft robot is interacting with the environment through a contact area, as for example in Fig. \ref{fig:interaction}. Then, a single point $c$ can be identified called the contact centroid \cite{bicchi1990intrinsic}, such that the net effect of the contact pressure distributions is an equivalent wrench $f_{\mathrm{ext}}, \tau_{\mathrm{ext}} \in \mathbb{R}^{3}$ in $c$. This can be included in \eqref{eq:dynamics} as follows
\begin{equation}\label{eq:dynamics_fa}
	M(q)\ddot{q} + C(q,\dot{q})\dot{q} + G(q) + D(q)\dot{q} + K(q) = 
	\begin{bmatrix}
	    A(q) & J_{\mathrm{c}}^{\top}(q)
	\end{bmatrix} 
	\begin{bmatrix}
	    \tau \\
	    f_{\mathrm{ext}} \\
	    \tau_{\mathrm{ext}}
	\end{bmatrix},
\end{equation}
where $J_{\mathrm{c}}(q) \in \mathbb{R}^{6 \times n}$ is the Jacobian mapping $\dot{q}$ to the linear and angular velocities of the robot in $c$.
A way of characterizing interactions is to look at the Cartesian stiffness matrix, which quantifies the change of reaction forces as a result of a perturbation of the contact location.
In unloaded conditions \cite{salisbury1980active,chen2000conservative} the physical Cartesian stiffness generated by the robot's softness is
\begin{equation}\label{eq:cartesian_stiff}
	K_{\mathrm{x}}(q) = J_{\mathrm{c}}(q)\underbrace{\left(\frac{\partial K(q)}{\partial q} + \frac{\partial G(q)}{\partial q}\right)}_{\text{Stiffness in configuration space}} J_{\mathrm{c}}^{\top}(q) \in \mathbb{R}^{6 \times 6}.
\end{equation}
Note that \eqref{eq:stablity_condition_fb} requires that the stiffness in $q$ space is as high as possible for maximum open loop stability. On the contrary, \eqref{eq:cartesian_stiff} requires to keep it small if the robot is required to behave compliantly in interaction with the environment.
There is therefore a trade-off between softness and stability which must be carefully considered during the robot's design phase. One way to resolve it is to consider ways of changing joint stiffness in time.
Configuration space stiffness can be varied actively by relying on feedback control \cite{suzumori1992applying,della2017controlling}, or passively by changing physical properties of the system \cite{gillespie2016simultaneous,mustaza2018stiffness,rizzello2019towards}.
The inversion of \eqref{eq:cartesian_stiff} is investigated in \cite{albu2004soft}, and in \cite{ajoudani2014natural,bern2021soft}. In the first authors discuss how to prescribe stiffness in configuration space so to achieve a desired Cartesian stiffness, while in the latter is the configuration $q$ to be optimized.
Direct Cartesian impedance control schemes have been proposed and experimentally validated by relying on the kinematic approximation \eqref{eq:kine_control_ua} in \cite{mahvash2011stiffness}, and on the task space dynamic formulation \eqref{eq:ee_dynl} in \cite{della2020model}. 
An in depth introduction to Cartesian impedance control for flexible systems is provided in \cite{ott2008cartesian}.
As an alternative, the wrench $f_{\mathrm{ext}},\tau_{\mathrm{ext}}$ can be directly regulated using Cartesian force control loops. This is achieved in \cite{bajo2016hybrid,yasin2021joint} under the kinematic approximation \eqref{eq:kine_control_ua}.
In \cite{zhang2019motion} 
control inputs are numerically evaluated as the ones minimizing a weighted sum of interaction forces and error at the end effector, and relying on a quasi-static FEM model. 
A similar strategy has been used to implement whole body manipulation \cite{coevoet2019soft}, when a model of the environment is available.

\begin{figure}[th!]
	\centering
	\includegraphics[width=.575\textwidth]{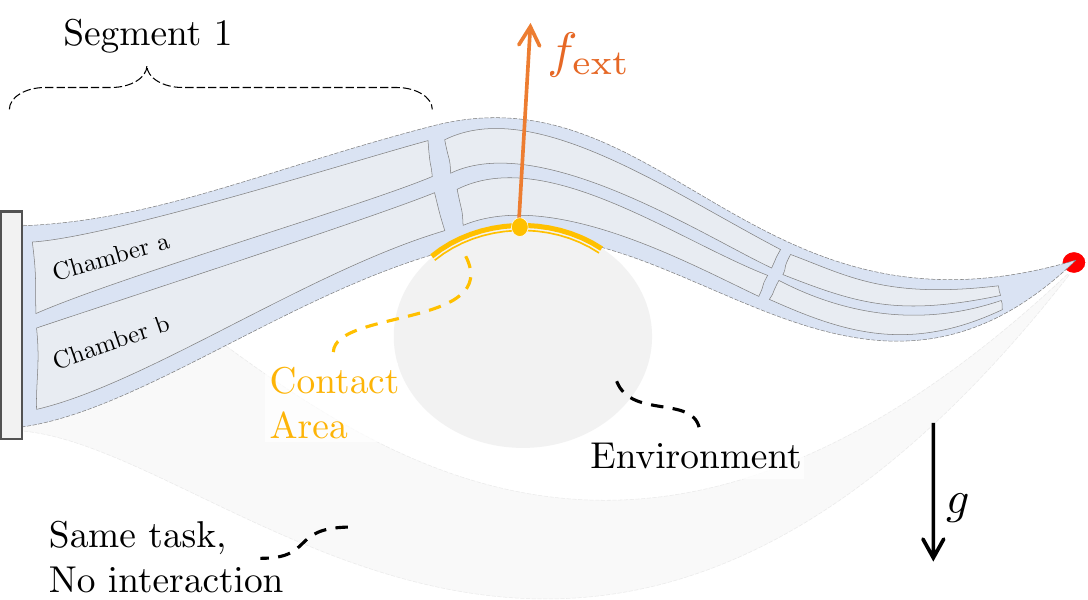}
	\caption{\small  Pictorial example of a soft robot composed of three pneumatically actuated segments and its relation with a simplified environment. The robot can achieve its tip positioning goal in two ways (desired configuration shown in red). It can plan its actions to avoid the environment altogether, or it can exploit the environment. In the latter case, the control design is more complex since it deals with parallel and possibly hybrid dynamics. On the other hand, the force $f_{\mathrm{ext}}$ exerted by the environment at the centroid of contact (shown as a yellow circle) relieves chambers a and b from the burden of sustaining the robot against gravity. If the contact is correctly preserved, it will also increase the stability margins of the system.
	\label{fig:interaction}}
\end{figure}

Contrary to standard robots, soft robots may need to actively seek interactions with the environment (Fig. \ref{fig:interaction}). Indeed, external wrenches may be seen as an extra actuation source, as it appears evident from \eqref{eq:dynamics_fa}. 
Thus, interactions can be used to overcome the limitations imposed by underactuation ($\mathrm{Span}\left(A\right) \subset \mathbb{R}^{n}$) and input saturations ($||\tau|| < c_{\tau}$). 
The use of external wrenches to sustain the robot's body is called bracing \cite{book1985bracing}, and it is demonstrated with a soft robot in \cite{marchese2016dynamics}.
Alternatively, environmental interactions can be used to enlarge the accessible space. A planning method for vine robots which finds the sequence of interactions necessary to reach the desired locations is discussed in \cite{selvaggio2020obstacle}. 

\begin{figure}[th!]
	\centering
	\includegraphics[width=.5\textwidth]{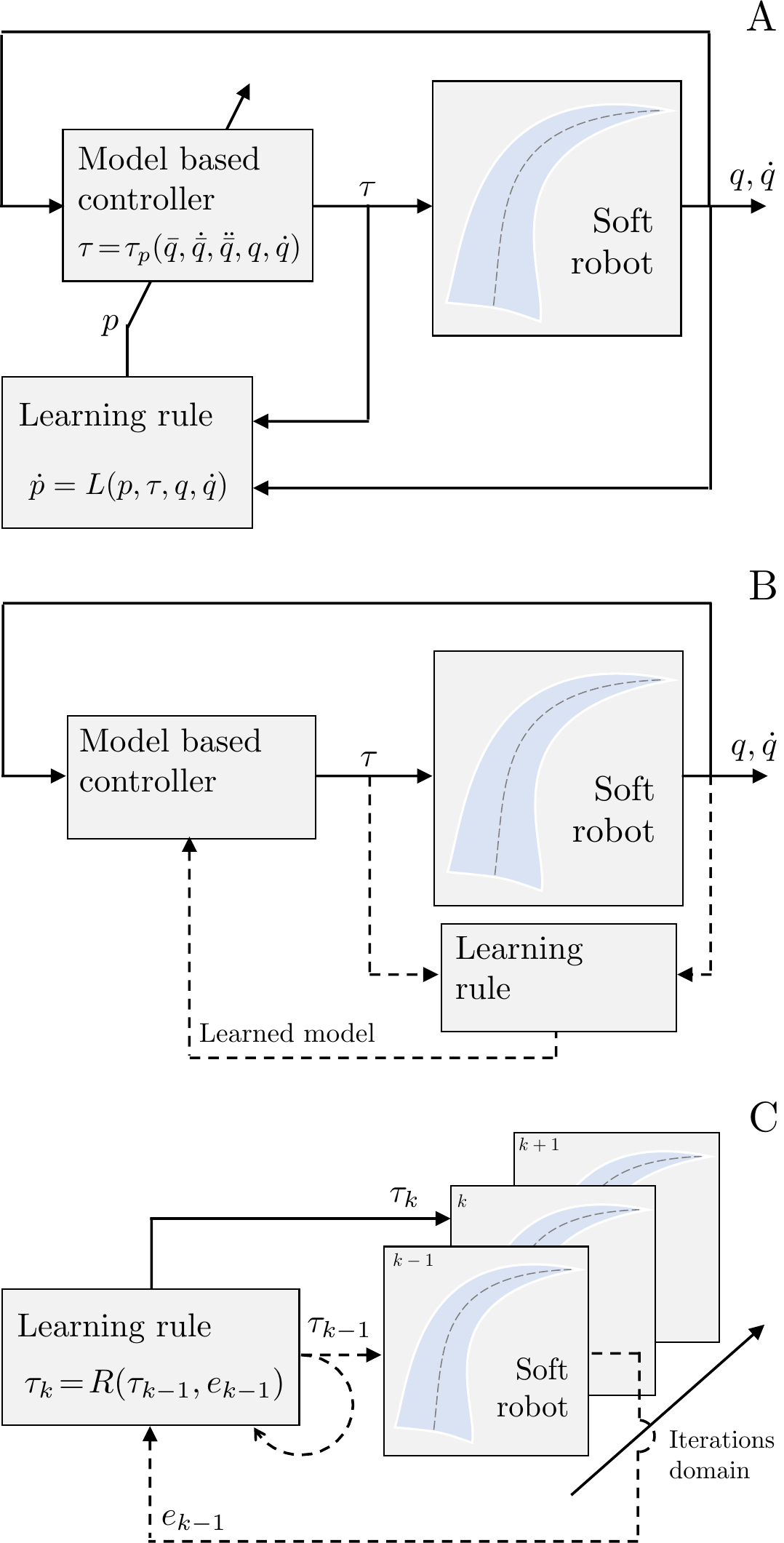}
	\caption{\small  Block schemes of three standard integration solutions between model based controllers and learning rules. Dashed lines represent a transfer of information that happens on a different time scale. Panel A shows an adaptive control architecture, where a learning loop $L$ continuously updates a model based controller. In Panel B a model is learned beforehand. The controller is designed once for all based on the learned model. Panel C reports a standard iterative learning control loop, where a feedforward action is updated iteration by iteration. Here, the learning rule itself $R$ is designed through model based techniques. \label{fig:learning}}
\end{figure}

\section{When first principle models alone are not enough: \\ leveraging data and machine learning in model-based control}

As already mentioned in the introduction, machine learning has been intensively used in the control of soft robots. 
Although many advancements have been made in model based formulations, the importance of integrating data into a model based perspective cannot be understated, especially in the soft robotics context.
At this point of the survey it will probably not come as a surprise that the main reasons are: (i) difficulties in obtaining a complete model (e.g. the actuators cannot be model from first principles, an accurate discretization would be too computationally expensive, the environment cannot be known in advance), and (ii) uncertainties are inherit in any soft robotic application (e.g. unreliability of sensors and actuators, change of physical parameters over time and over several iterations of the same device).

This section focuses on how learning can be integrated into a model-based framework to tackle the control of soft robots. 
Combining model with data is a quite active topic in the control community, and many solutions are currently being developed that will most probably eventually find useful application in the soft robotic field \cite{nageshrao2015port,bucsoniu2018reinforcement,de2019formulas,berberich2020data,sznaier2020control}. 
It is however beyond the scope of this work to discuss these new advancements.
Moreover, the same special issue contains another survey paper fully focused on machine learning strategies for soft robots \cite{cecilia}.

\subsection{Using models to drive learning}

The acquisition of new information and its transformation into a control action can be driven by the knowledge of (an approximation of) the model itself. This can be done while learning a feedback control, a feedforward action, or by serving as source of synthetic data for more standard machine learning approaches.

\subsubsection{Adaptive control}

Adaptive control (Fig. \ref{fig:learning} A) is an established technique in control theory \cite{aastrom2013adaptive}, which allows to augment feedback controllers with an online learning loop. 
The typical structure of an adaptive controller is
\begin{equation}\label{eq:adaptive}
	\dot{p} = \mathcal{L}(p, \tau, q, \dot{q}), \quad \tau = \tau_{p}(\Bar{q},\dot{\Bar{q}},\ddot{\Bar{q}},q,\dot{q}).
\end{equation} 
The uncertainty is represented as a set of unknown parameters $p \in \mathbb{R}^{o}$ appearing linearly in \eqref{eq:kinematics} or \eqref{eq:dynamics}. 
The control action $\tau$ is generated through a model based controller $\tau_{p}$ parametrized in $p$. 
For example, the dynamics of the constant curvature segment discussed in sidebar \ref{sb:CC_derivation}\ref{?} can be linearly parametrized in $m L^2$, $m g L$, $\int_0^1 k(s) \mathrm{d}s$, and $\int_0^1 s \, d(s) \mathrm{d}s$. The second and the third would for example appear as parameters in an adaptive version of \eqref{eq:PD_setpoint}.
The model structure can guide the design of a learning rule $\mathcal{L}$, such that $p$ is moved towards values that better explain the data. 
If $\mathcal{L}$ learns the parameters $p$ that describe the real system, then from there $\dot{p} = 0$ and $\tau_{p}$ behaves as a standard model based controller. 

Classic results in adaptive PD \cite{tomei1991adaptive} and PD+ \cite{slotine1987adaptive} control can be potentially applied to the soft robotic case, by leveraging the equivalence exemplified by \eqref{eq:dynamics_regulation_ff}. Yet, the transfer is less direct than for the non adaptive case. 
An adaptive visual servoing scheme is proposed in \cite{wang2016visual}, by relying on a kinematic PCC approximation. 
Linear adaptive control is used in \cite{skorina2017adapting} to control a single bending actuator.  
A similar strategy is applied to the control of a soft hand exoskeleton in \cite{tang2019model,tang2021model}, where also the parameters of the human fingers are learned.
An adaptive version of MPC is discussed in \cite{terry2019adaptive,hyatt2020model}, and experimentally validated against a non-adaptive MPC. 
A nonlinear adaptive controller originally developed for rigid robots is applied to soft robots in \cite{trumic2020adaptive}, by relying on an augmented rigid robot approximation. 
Adaptive control can be extended beyond known-unknowns by including in $p$ generic disturbances acting on the system. 
High gain observers are used in \cite{zheng2020control} to estimate and compensate for the mismatch between a simplified linear model of a parallel soft robot and the real system.
Also, \cite{franco2021position} discusses the position regulation in Cartesian space for a soft robot with discretized but uncertain kinematics, including adaptive compensation of disturbances.
An adaptive loop that learns the gains of a sliding mode controller is proposed in \cite{cao2021observer}.
In classic robotics, this concept has been pushed even further by adding nonlinear black box approximators borrowed from machine learning to the original dynamics, and including their weights in the $p$ vector \cite{sanner1991gaussian,polycarpou1996stable,yang2017global}.

\subsubsection{Iterative Learning Control}

As an alternative to learning the feedback loop, models can be used to guide the learning of a feedforward action (Fig. \ref{fig:learning} C). This can be done under the hypothesis that a same task can be tried out multiple times by the robot, by using iterative learning control \cite{bristow2006survey} (ILC). Call $k \in \mathbb{N}$ the iteration index, which captures how many times the robot has attempted the execution of the task. Then, in this context a learning rule $R$ is a way of updating the feedforward action
\begin{equation}
    \tau_k(t) = R(\tau_{k-1}([0,t_{\mathrm{f}}]), e_{k-1}([0,t_{\mathrm{f}}])), 
\end{equation}
where $e(t)$ is a measure of how well the task has been executed at time $t$. Note that the learning rule $R$ can in general combine information from the whole error and control evolution at the step $k - 1$. The design of $R$ is driven by the knowledge of the nominal model, and it is defined is such a way that eventually the robot learns a feedforward action $\tau_\infty(t)$ implementing a perfect execution of the task ($||e_{\infty}|| = 0$).
ILC is particularly suited for soft robots since: (i) the learning process is robust to uncertainties in the model, (ii) purely feedforward actions do not disrupt the physical softness of the system \cite{della2017controlling}, and (iii) they are inherently stable if the robot is not too soft.
The actuation patterns necessary to track an optimal trajectory are learned using this technique in \cite{marchese2016dynamics}. Crawling motions of soft worms have been improved via ILC in \cite{seok2012meshworm,chi2019iterative}.
A linear discrete learning rule is used in \cite{hofer2020design,zughaibi2020fast} to control a soft spherical joint. Nonlinear ILC is used in \cite{cao2020analytical} to control a soft finger.
ILC is combined with MPC in \cite{tang2019novel} and used to control a soft bending actuator mounted on an human finger. A continuous feedback-feedforward rule is proposed in \cite{cenceschi2021pi} and applied to the swing up of a soft inverted pendulum.

\subsubsection{Simulators as virtual environments}

A more indirect way of using models to drive learning is by building simulators. Models developed from first principles can be used to create virtual environments where robots can evolve \cite{rieffel2014growing,cheney2015evolving} or learn new skills \cite{tobin2017domain,rojas2021differentiable,naughton2021elastica}. Differentiable simulators are particularly suited to be used in this context, as discussed in \cite{bacher2021design}. Indeed, having gradients available allows to backpropagate through the simulator, which opens up the possibility of apply supervised machine learning and directly optimize for a desired behavior.

\subsection{Control loops based on learned models}

Directly learning an end-to-end controller using standard machine learning techniques can present several disadvantages, as limited explainability \cite{samek2017explainable} and difficulties in ensuring performance and stability of the closed loop. Moreover, the learning process may be ill conditioned due to highly redundant nature of soft robots. 
An alternative to learning the controller is to learn the model and then using it within a model-based control framework.

\subsubsection{Learning the model, by using the model}
Approximation of dynamical systems is discussed in \cite{sjoberg1995nonlinear,schoukens2019nonlinear}, and model learning for robot control is surveyed in \cite{nguyen2011model}.
When using fully black box approximators, first principle models can still be used to generate a warm start for the learning process.
For example, \cite{johnson2021using} uses a liner model of pneumatic actuator for pre-training a neural network, which is then fine tuned with experimental data. A nonlinear model of multi-segment soft robot is used in \cite{tariverdi2021recurrent} to train a recurrent neural network.
Even when a good model is already available, it can still be the case that learning strategies are used to better its performance.
A method for learning only the nonlinear stiffness characteristics of a soft robot is discussed in \cite{deutschmann2018method}. The acquisition of data is driven by a FEM model of the elastic part. An optimal estimation strategy of geometrical quantities describing the robot kinematics is discussed in \cite{wang2019geometric}.

\subsubsection{Closing the loop}

Once learned, the models can be used as a base for model based control loops (Fig. \ref{fig:learning} B). 
In \cite{bern2020soft} a shallow neural network is used to learn the forward kinematics of a soft robot. Then \eqref{eq:kine_control_ua} is used to solve the inverse problem. Neural networks are fully differentiable, and thus Jacobian can be easily calculated.
This is advantageous if compared to directly learning the inverse kinematics, since \eqref{eq:kine_control_ua} resolves automatically those redundancies that would make the direct learning of an inverse kinematics ill posed.
Visual servoing under kinematic approximation for controlling the shape of a soft object is extended in \cite{lagneau2020active}  to the case where the Jacobian is estimated online.
Similar strategies can be employed also in a dynamic setting.
Whenever \eqref{eq:stablity_condition_ff} is verified, learned models can be used to produce open loop control actions which are ineherently stable \cite{thuruthel2017learning,thuruthel2018stable,satheeshbabu2019open}.
Learned models can be used also when feedback is needed to stabilize the desired behavior.
For example, a neural network is trained to approximate the update function of a soft segment in \cite{gillespie2018learning}, and its input-output gradient is used to run a linear MPC algorithm. A similar strategy is used in \cite{johnson2021using}, but here the neural network is directly incorporated into an MPC control loop.
Alternatively, Koopman theory enables learning directly a linear dynamics evolving within an high dimensional lifted space  \cite{bruder2019modeling,kamenar2020prediction,castano2020control}. The resulting model has been used as a base for LQR \cite{haggerty2020modeling} and linear MPC \cite{bruder2020data}.

\section{Conclusions}
This article has surveyed control strategies for soft robots that rely on model based formulations. Special attention has been devoted to organizing this large body of literature within a coherent framework and with a common terminology inspired by classic robotics and robot control. Thanks to the latter, once the discretization of the infinite-dimensional space is introduced, the similarities between rigid, flexible, soft robots become apparent.  Connections with existing results developed outside the soft robotic could therefore be drawn, and controllers ported from the rigid to the soft continuum world.  On the other hand, using a common language clearly pinpoint the fundamental differences between soft robotics and other related fields. The most apparent one is a large number of degrees of freedom, making soft robots intrinsically underactuated. These characteristics would make the control challenge too complex to be solved, if not in straightforward cases. Nonetheless, the positive definite elastic potential and the strictly dissipative force field that is always present, no matter the discretization, simplify the control problem enormously.

Notwithstanding the significant advancements achieved so far, the research community has barely scratched the model-based view's surface to soft robotics. Many are indeed the challenges that remain open and the questions unanswered. How to take underactuation into account? To which extent the non-actuated dynamics can or can not be neglected? Can generic unstable equilibria be stabilized? How to implement motions which are simultaneously compliant, fast, and precise? And controlled movements involving continuous interactions with an unstructured environment? Can a complete integration of embodied intelligence and control design be reached within the model based framework? And so on.

Finally, it should not be forgotten that soft robotics has been born as an experimental discipline, which aims to revolutionize how robots are entering our lives. Thus, all these theoretical advancements should contribute to realizing this grand vision by endowing real soft robots with unmatched motor capabilities.

\newpage
\clearpage

\clearpage
\section[Kinematics and Dynamics of a Planar Constant Curvature Segment]{Sidebar: Dynamics of a Constant Curvature Segment}
\label{sb:CC_derivation}

The goal of this sidebar is to help a novice in soft robotics to familiarize with the topic by concisely presenting the derivation of the main ingredients of what is arguably the simplest soft robot: a constant curvature (CC) segment. Regardless its simplicity, this case already allows to build many intuitions that can directly generalized to more complex and general cases.
Consider a single planar segment as in Fig. \ref{fig:single_cc}, which is an arc with fixed length $L$ but curvature possibly varying in time.
The scalar curvature $q \in \mathbb{R}$ is sufficient to describe its full configuration. Note that since the curvature is defined here with respect to the normalized arc length, then $q$ is the angle subtended by the arc - also called bending angle. The two concepts have been used interchangeably in this paper.
As a comparison, Fig. \ref{fig:single_cc} reports also the non-continuum element of which a CC segment can be consider the direct extension of: a revolute joint connecting two rigid links of length $L/2$. This can be seen as a rigid-link lumped approximation of the CC segment.

The shape $x(s,t)$ of the soft robot can be expressed by collecting the position and orientation of all the reference frames $S_{s}$ connected to the coordinate $s \in [0,1]$. These quantities can be retrieved via simple geometrical arguments, as visually illustrated by Fig. \ref{fig:single_cc}. The result is
\begin{equation}\label{eq:forward_cc}
    x(s,t) = h(s,q(t)) =  L \begin{bmatrix}
        \frac{\sin{s \, q(t)}}{q(t)} &\frac{1 - \cos{s \, q(t)}}{q(t)} &\frac{s}{L} q(t)
    \end{bmatrix}^{\top}\!.
\end{equation} 
Thus, $q$ can be defined also as the angle between base frame and tip frame. 
Note that $x(s,t)$ has no singularity point, since its limit in the straight configuration ($q = 0$) is well defined and equal to $[L \; 0 
\; 0]^{\top}$. However, the division by $0$ can generate numerical instabilities in practice.
Fig. \eqref{fig:cc_kine} compares how the shape of a CC segment changes compared to the one of its lumped discrete approximation. The two models gets progressively more different with the increase of $|q|$, one reason being that the length arc to which both links of the rigid model are tangent shrinks of a factor $(q/2)\cot{(q/2)}$.

According to \eqref{eq:kinematics}, the  Jacobian matrix mapping the time derivative of the curvature $\dot{q}(t) \in \mathbb{R}$ to  $\dot{x}(s,t) \in \mathbb{R}^{3}$ is
\begin{equation}
    J(s,q) = L\begin{bmatrix}
 \frac{s q\cos\left(s q\right) - \sin\left(s q\right)}{q^2} & \frac{\left(\cos\left(s q\right)-1\right) + s q\sin\left(s q\right)}{q^2} & \frac{s}{L} 
    \end{bmatrix}^{\top}.
\end{equation}
This kinematic description is sufficient to express the inertia according to \eqref{eq:inertia}. If an uniform distribution of mass ($m(s) = m$) and a very thin rod ($\mathcal{J} \simeq 0$) are assumed, then the configuration dependent inertia is
\begin{equation}\label{eq:M_cc}
    M(q) = \frac{m L^2}{20}\,\underbrace{\left(\frac{20}{3}\frac{q^3 + 6 q - 12 \sin\left(q\right) + 6 q \cos\left(q\right)}{q^5}\right)}_{\lim\limits_{q \rightarrow 0} * \, = \, 1} > 0.
\end{equation}
Note that similar closed form solutions for $M$ can be found for different mass distributions and non null inertia. These assumptions are introduced here only for the sake of conciseness.
Fig. \ref{fig:cc_quantities} shows a plot of $M(q)$ for all the curvatures in $[-2\pi,2\pi]$. The inertia decreases with the increase of $|q|$ following a bell curve that goes to $0$ when $|q| \rightarrow \infty$. This is because changes in $q$ are reflected in progressively smaller changes in the shape of the soft robot when the curvature is larger - i.e., $||x(s,q + \delta_q) - x(s,q)||^2_2$ decreases with the increase of $|q|$ for all fixed $\delta_q > 0$.
It is also worth noticing that the inertia of the lumped model with homogeneous distribution of mass is $(m/2)(L/2)^2/3 = mL^2/24$, which is smaller than $M(0)$, despite the two system being perfectly superimposed in the straight configuration. This can be explained by considering that the rigid model neglects the motion of the lower half of the robot, and so an actuation torque sees only the inertia produced by half of the robot's body.

Since $M$ is not constant, this formulation of the CC segment dynamics is affected by the following centrifugal force
\begin{equation}\label{eq:C_cc}
    \begin{split}
        C(q,\dot{q})\dot{q} &= \frac{1}{2} \frac{\mathrm{d} M}{\mathrm{d} t}\dot{q} \\
        &=  
        -\frac{m L^2}{3}\frac{12\,q-30\,\sin\left(q\right)+3\,q^2\,\sin\left(q\right)+18\,q\,\cos\left(q\right)+q^3}{q^6}\dot{q}^2.
    \end{split}
\end{equation}
Note that we could evaluate $C$ by direct differentiation of $M$ since both are scalar. Fig. \ref{fig:cc_quantities} reports the evolution of this force when $q$ changes. As expected from a centrifugal action $-C(q,\dot{q})\dot{q}$ tends to increase $|q|$ for all $\dot{q} \neq 0$.

Consider the base of the robot being oriented with a generic angle $\phi$ with respect to a gravity acceleration of intensity $g$. The gravity potential can be calculated by summing up the contributions of each infinitesimal element 
\begin{equation}\label{eq:Ug_cc}
	U_{\mathrm{G}}(q,\phi) = \int_0^1 \underbrace{m \, g \, \left(x(s,0) - x(s,q)\right)^\top 
	\begin{bmatrix}
	\cos(\phi) \\ \sin(\phi) \\ 0
	\end{bmatrix}}_{\text{Infinitesimal contribution of element $s$}}  \mathrm{d}s,
\end{equation}
which is the variation of the center of mass location with respect to the straight configuration, projected to the direction of the gravity acceleration, and multiplied for $mg$.
According to \eqref{eq:potentials}, direct differentiation of the associated potential yields the gravitational torque
\begin{equation}\label{eq:G_cc}
    \begin{split}
        G(q,\phi) = &-m \, g \left( \int_0^1  J(s,q) \, \mathrm{d}s \right)^\top \, \begin{bmatrix}
        \cos(\phi) \\ \sin(\phi) \\ 0
        \end{bmatrix} \\
           = &-m \, g \, L\left(2 \, \frac{\cos\left(q - \phi\right)- \cos\left(\phi \right)}{q^3} +  \frac{\sin\left(q -\phi\right)  - \sin\left(\phi \right)}{q^2}\right).
    \end{split}
\end{equation}
Fig. \ref{fig:cc_quantities} depicts the case of $\phi = 0$, corresponding to a gravity field aligned with the straight configuration of the robot (pointing downward in Fig. \ref{fig:cc_kine}).
Two relevant symmetries that may help thinking about how $G$ changes with $\phi$ are $G(q,\phi) = - G(q,\phi + \pi)$ and $G(q,\phi) = - G(-q,-\phi)$. 
The flexural rigidity can be modeled as a torque proportional to the local bending of the robot, which is the curvature $q$. Thus, the elastic force is
\begin{equation}\label{eq:K_cc}
	K(q) = \frac{\partial}{\partial q}\overbrace{\int_{0}^{1}  \underbrace{\frac{1}{2} \; k(s) \; q^2}_{\text{Infinitesimal contribution}} \mathrm{d}s}^{U_{\mathrm{K}}(q)} = \underbrace{\left(\int_0^1 k(s) \, \mathrm{d}s\right)}_{\text{Average stiffness}} q,
\end{equation}
where $k(s) \in \mathbb{R}$ is the local stiffness in $s$, which is assumed to be almost constant in order for the CC assumption to hold. Similarly, the damping torque can be evaluated by assuming local dissipation proportional to the variation of curvature. The torque needs then to be mapped in $q$ leveraging the kinetostatic duality 
\begin{equation}\label{eq:D_cc}
	D(q)\dot{q} = \int_0^1 \underbrace{J(s,q)^{\top} \begin{bmatrix}
	0 \\ 0 \\ d(s) \, \dot{q}
	\end{bmatrix}}_{\text{Infinitesimal contribution}} \mathrm{d}s = \underbrace{\left(\int_0^1 s \, d(s) \, \mathrm{d}s\right)}_{\text{Equivalent damping}} q.
\end{equation}
where $d(s) \in \mathbb{R}$ is the local damping in $s$. Thus, both elastic and damping forces are linear under the discussed assumptions.
Equivalent results are obtained also when infinitesimal springs and dampers proportional to the elongation are assumed distributed along the thickness of the robot \cite{della2020model}. Finally, consider the robot to be actuated with a pure torque applied at the tip, resulting in
\begin{equation}\label{eq:A_cc}
	A(q)\tau = J(1,q)^{\top} \begin{bmatrix}
		0 \\ 0 \\ \tau
		\end{bmatrix} = \tau.
\end{equation}
Eqs. \eqref{eq:M_cc}-\eqref{eq:A_cc} can be combined by using \eqref{eq:dynamics}, yielding a scalar second order dynamics for $q$ which has the same structure and structural properties of a lumped joint model with parallel impedance, but with different and more complex expressions. Examples of the resulting evolutions are shown in Fig. \ref{fig:cc_evolution}

\begin{figure}
	\centering
	\includegraphics[height =  0.45\textwidth]{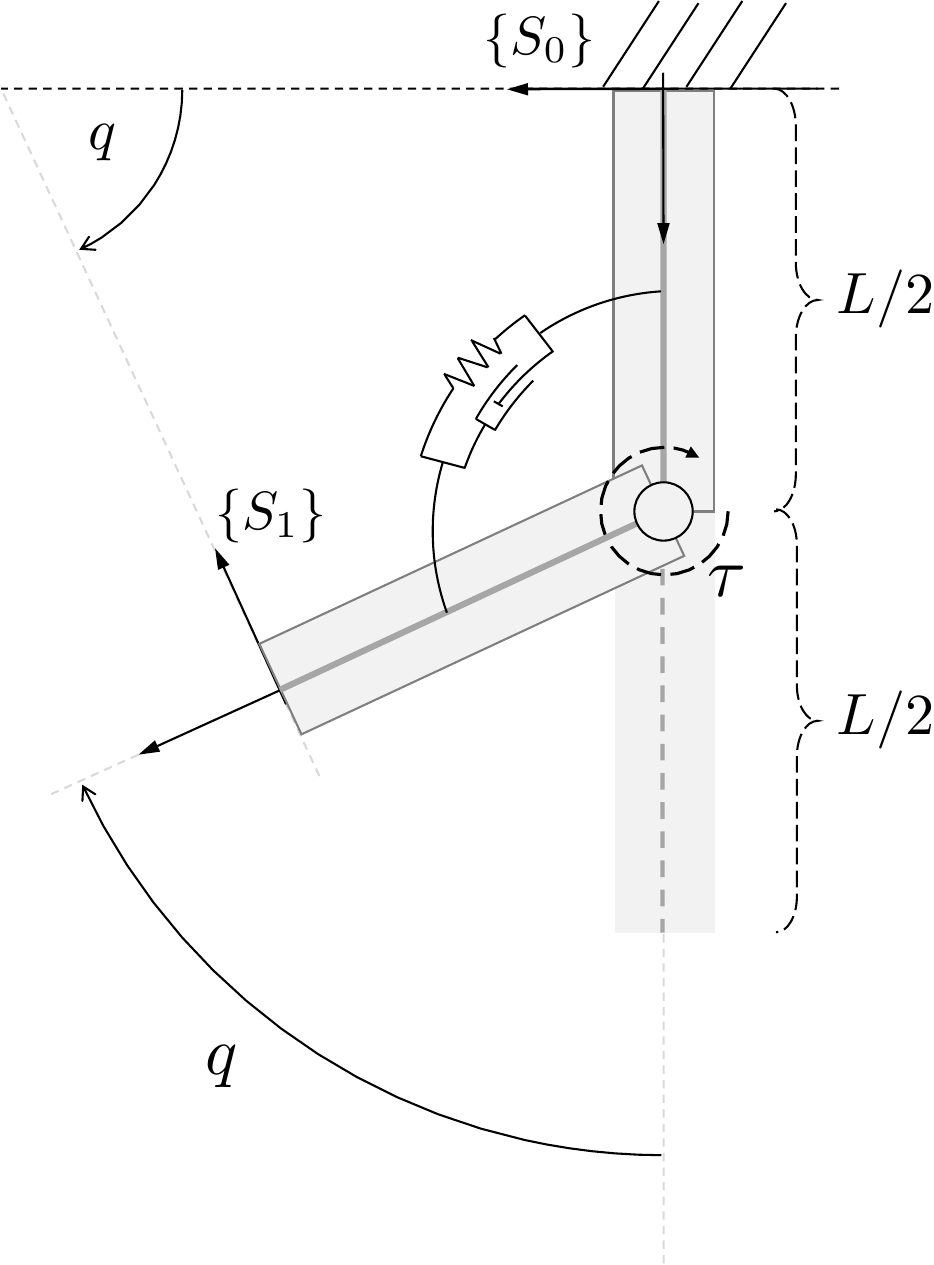} \hspace{0.05\textwidth}
	\includegraphics[height =  0.45\textwidth]{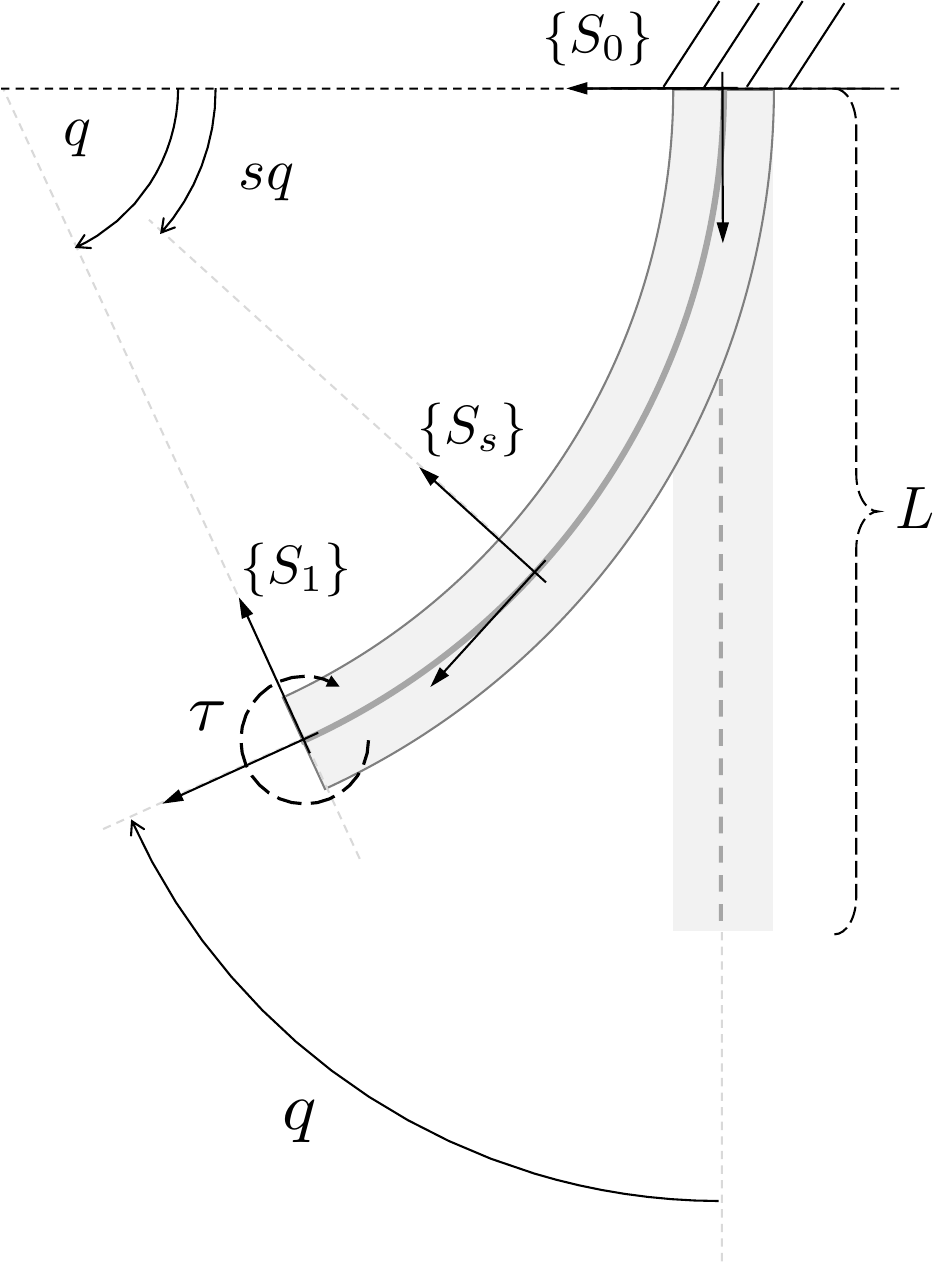}
	\caption{\small A constant curvature segment together with and lumped rigid-link model serving as its first order approximation. The two resulting dynamics have equivalent structural properties, but are described by substantially different dynamic equations. \label{fig:single_cc}}
\end{figure}

\begin{figure}
	\centering
	\includegraphics[width=  0.45\textwidth,trim = {0 0 55 0}, clip]{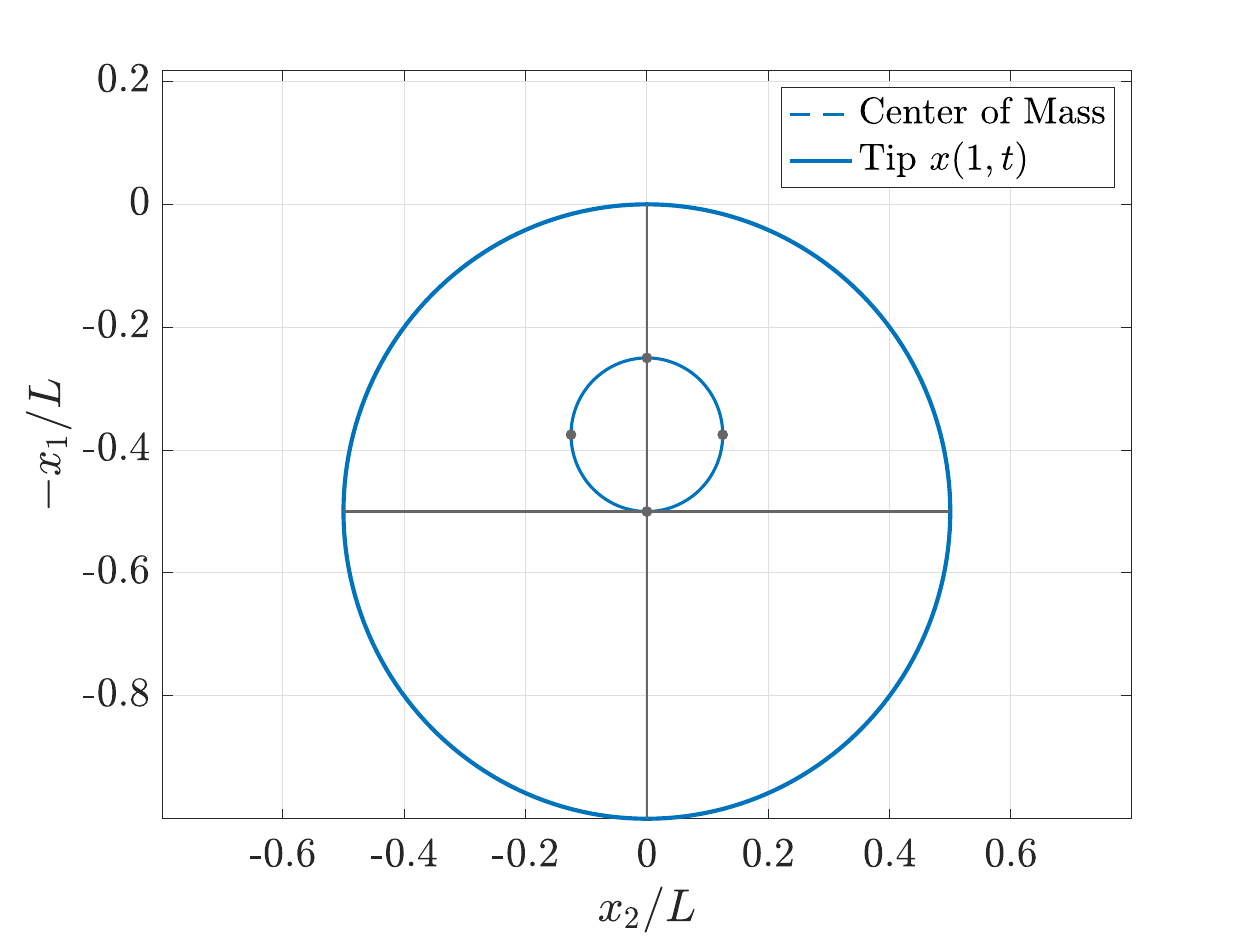}
	\includegraphics[width=  0.45\textwidth,trim = {0 0 55 0}, clip]{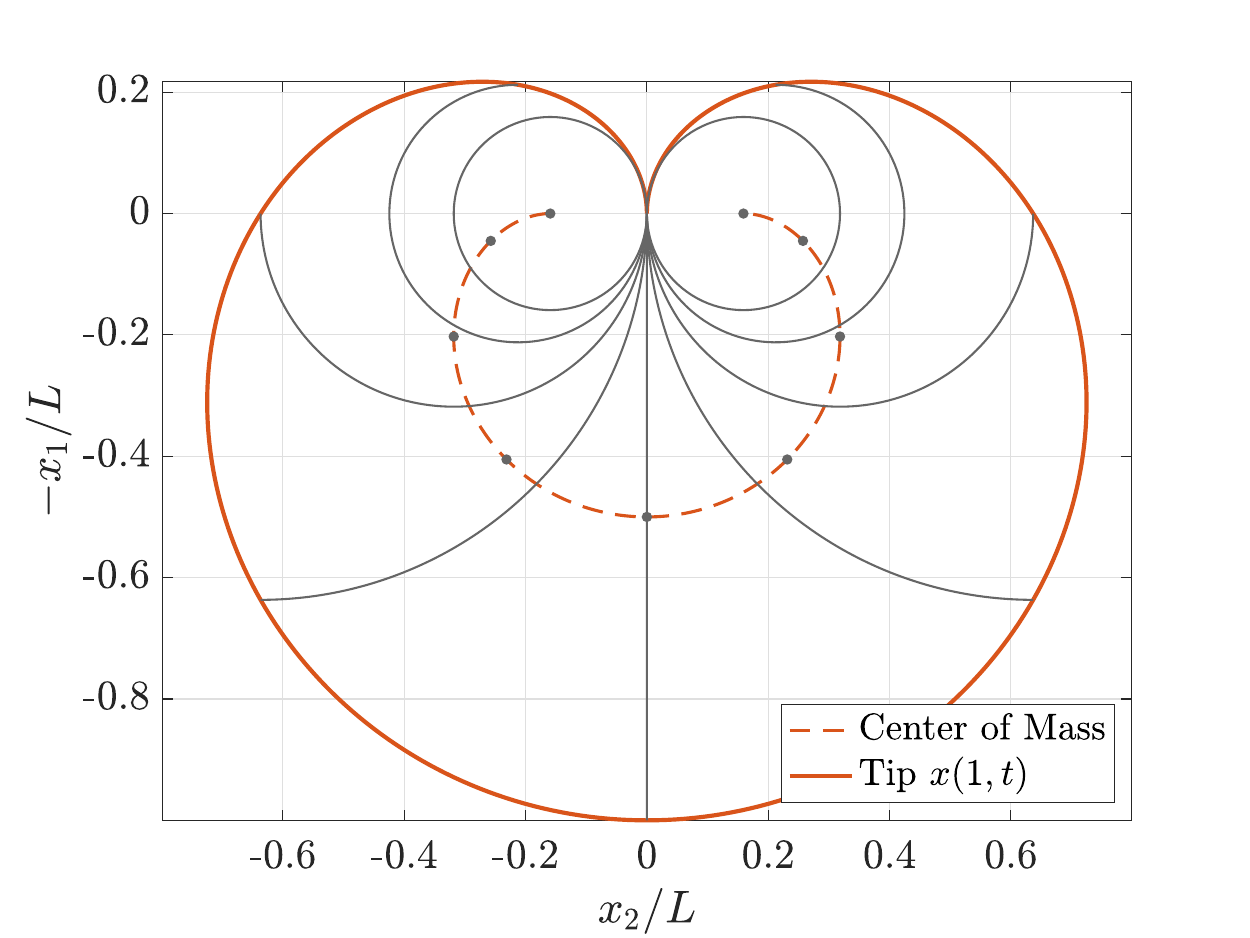}
	\caption{\small Geometrical characterization of a rigid robot with a single revolute joint (left) and of a constant curvature robot (right). the behaviors are similar close to the straight configuration, but strongly depart from each other when the angle $|q|$ increases. The configurations corresponding to $q \in \{-2\pi, -3\pi/2, \dots, 3\pi/2, 2\pi  \}$ are shown with thin gray lines. The corresponding centers of mass are also reported as a gray dot. Note that this range of angles corresponds to two full rotations for the rigid links case. \label{fig:cc_kine}}
\end{figure}

\begin{figure}
	\centering
	\includegraphics[width=  0.3\textwidth,trim = {0 0 45 0}, clip]{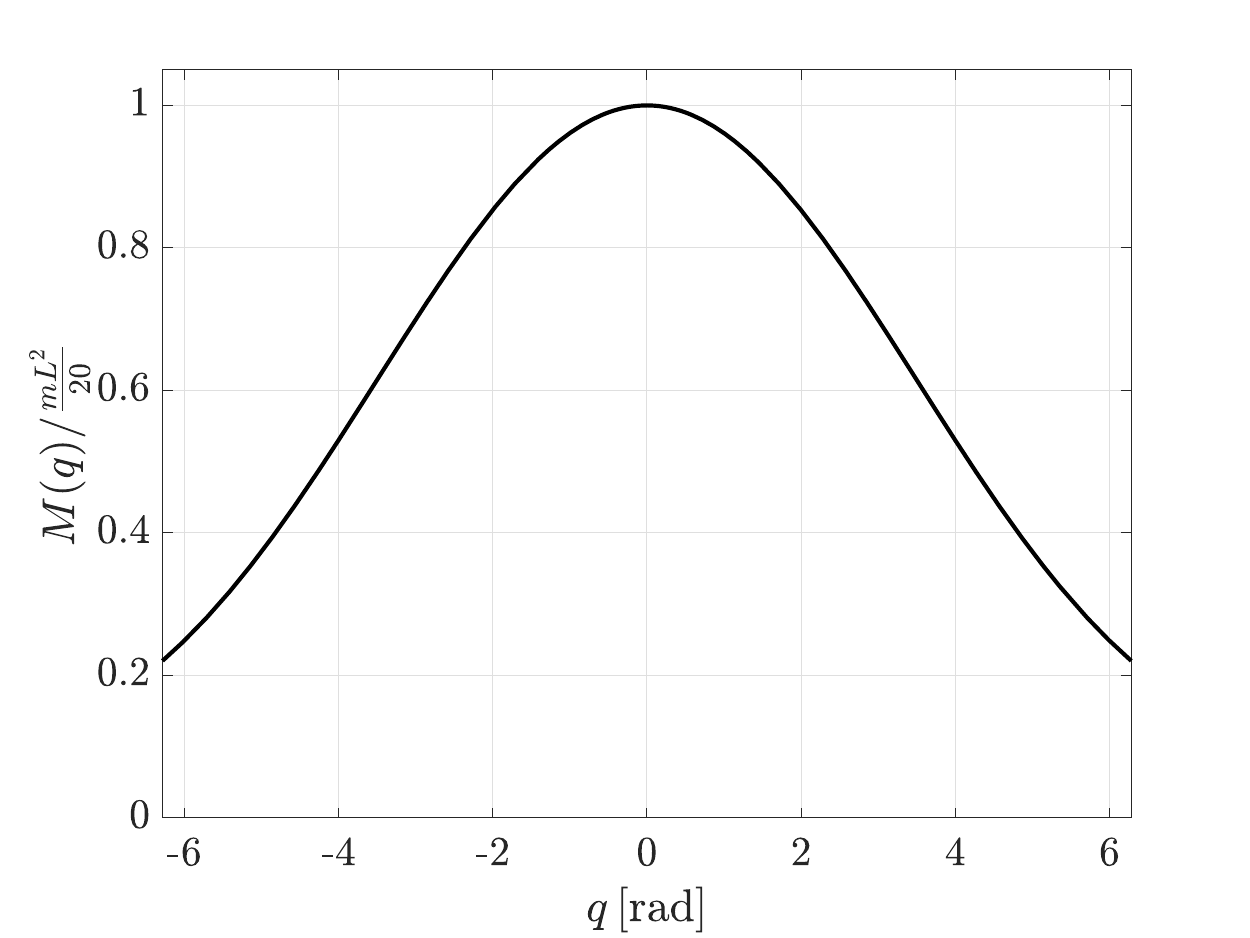}
	\includegraphics[width=  0.3\textwidth,trim = {0 0 45 0}, clip]{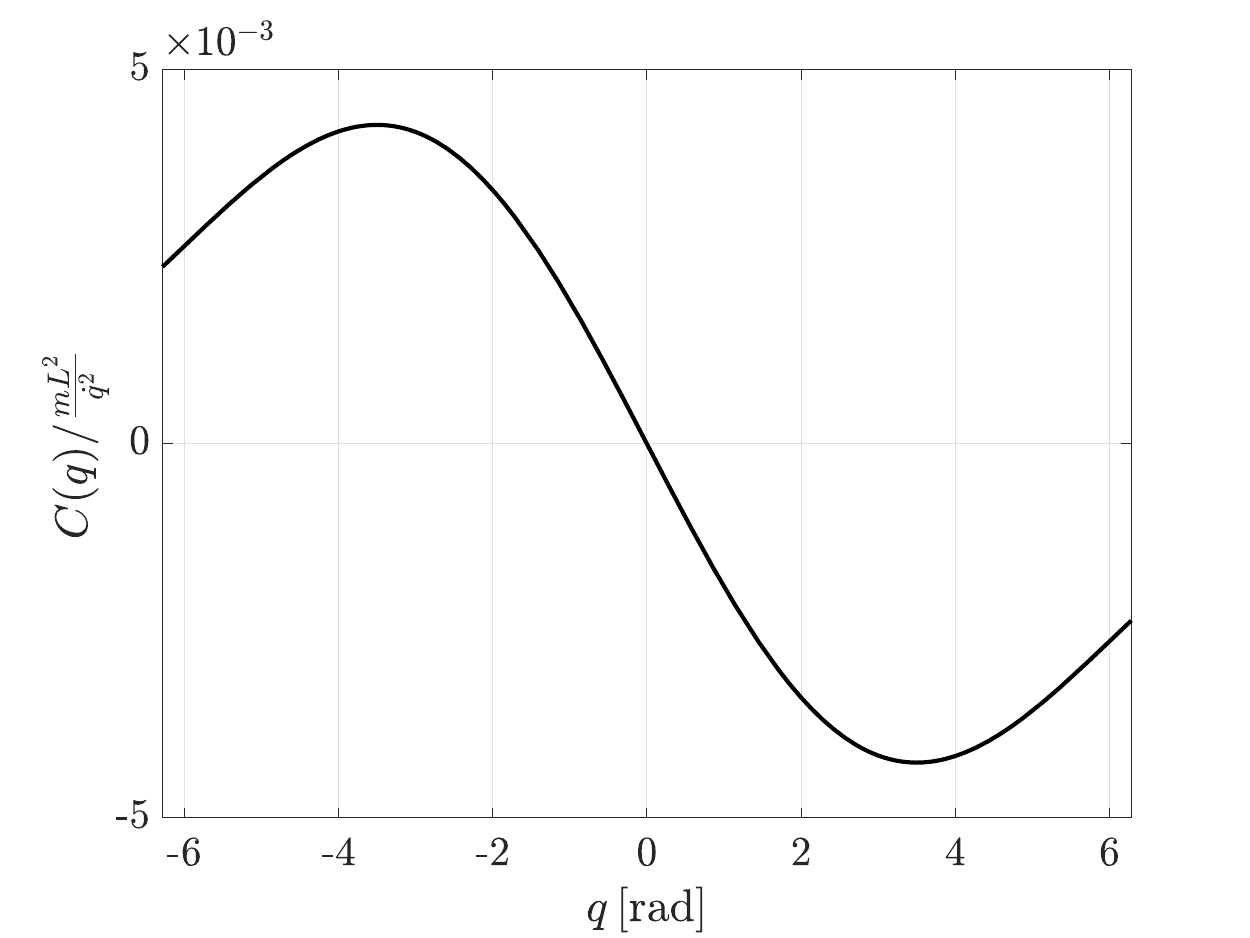}
	\includegraphics[width=  0.3\textwidth,trim = {0 0 45 0}, clip]{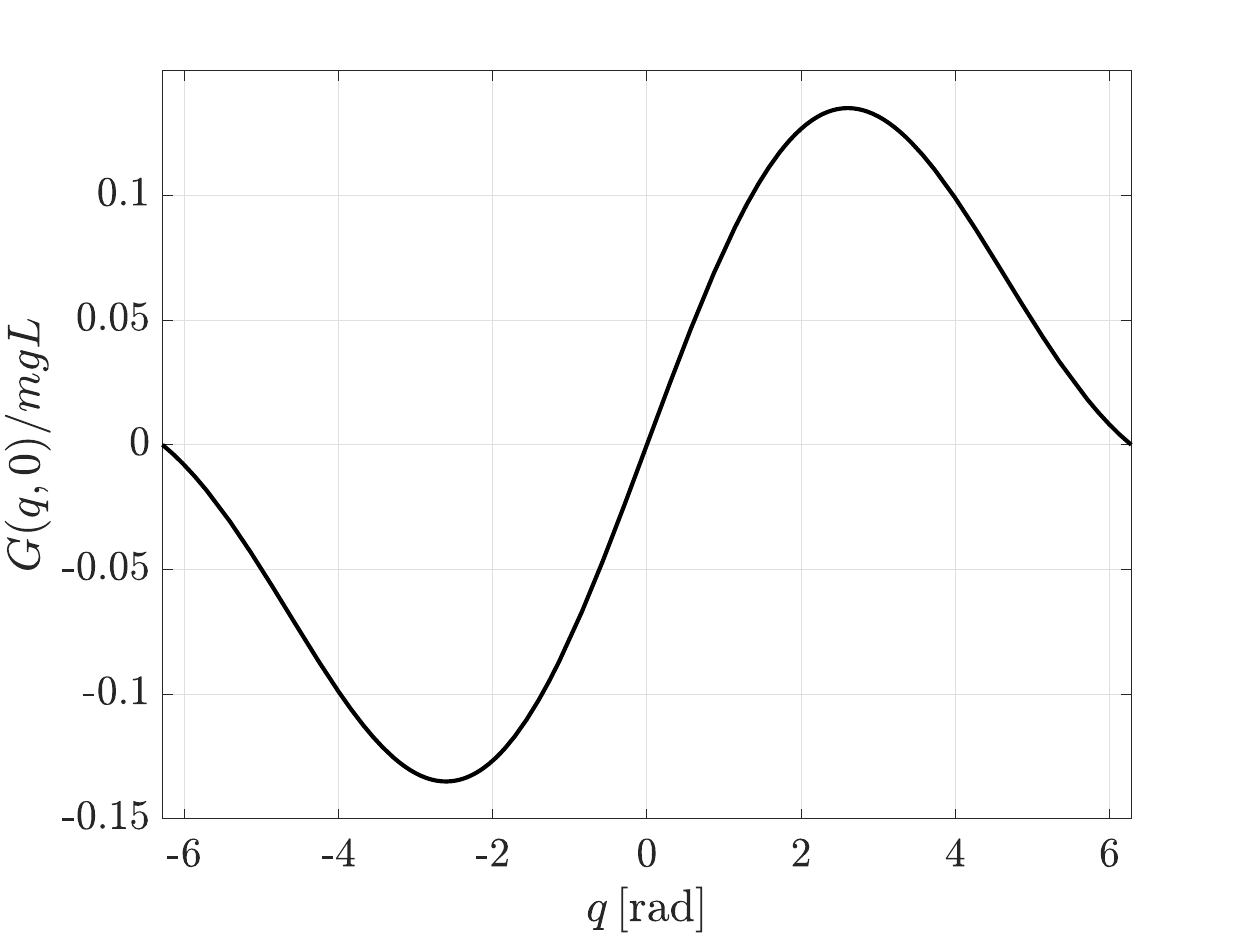}
	\caption{\small Evolutions of \eqref{eq:M_cc}, \eqref{eq:C_cc}, and \eqref{eq:G_cc} all normalized with respect to the quantities that appear linearly in their expression. A change in those quantities result in a linear scaling of the plots along the vertical axis. \label{fig:cc_quantities}}
\end{figure}


\begin{figure}
	\centering
	\includegraphics[height =  0.3\textwidth,trim = {10 0 40 0}, clip]{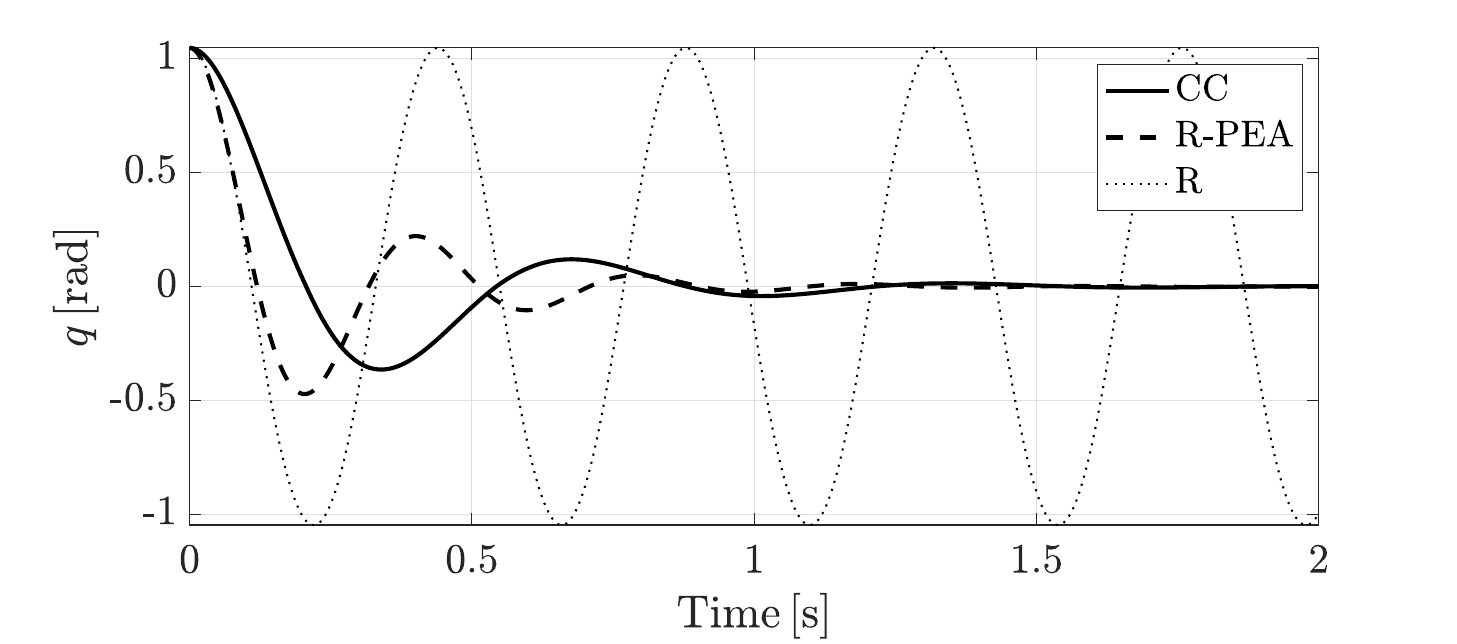}
	\includegraphics[height =  0.3\textwidth,trim = {10 0 40 0}, clip]{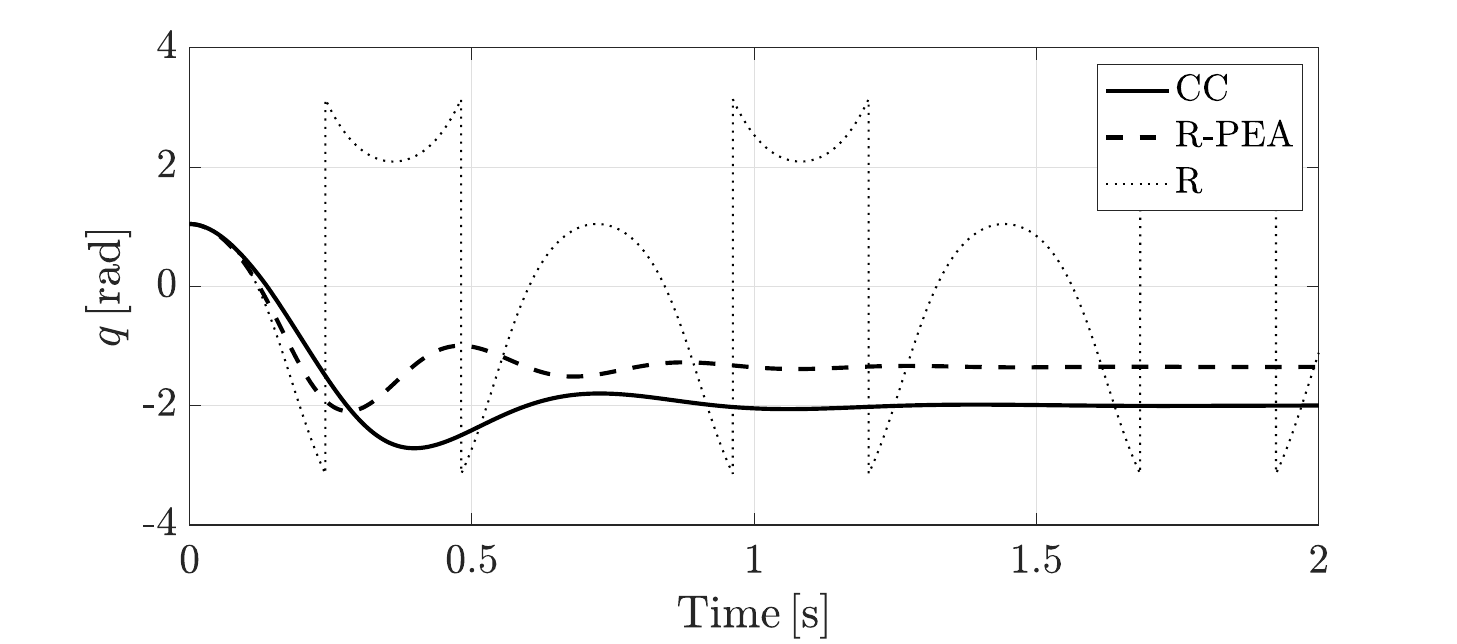}
	\includegraphics[height =  0.3\textwidth,trim = {10 0 40 0}, clip]{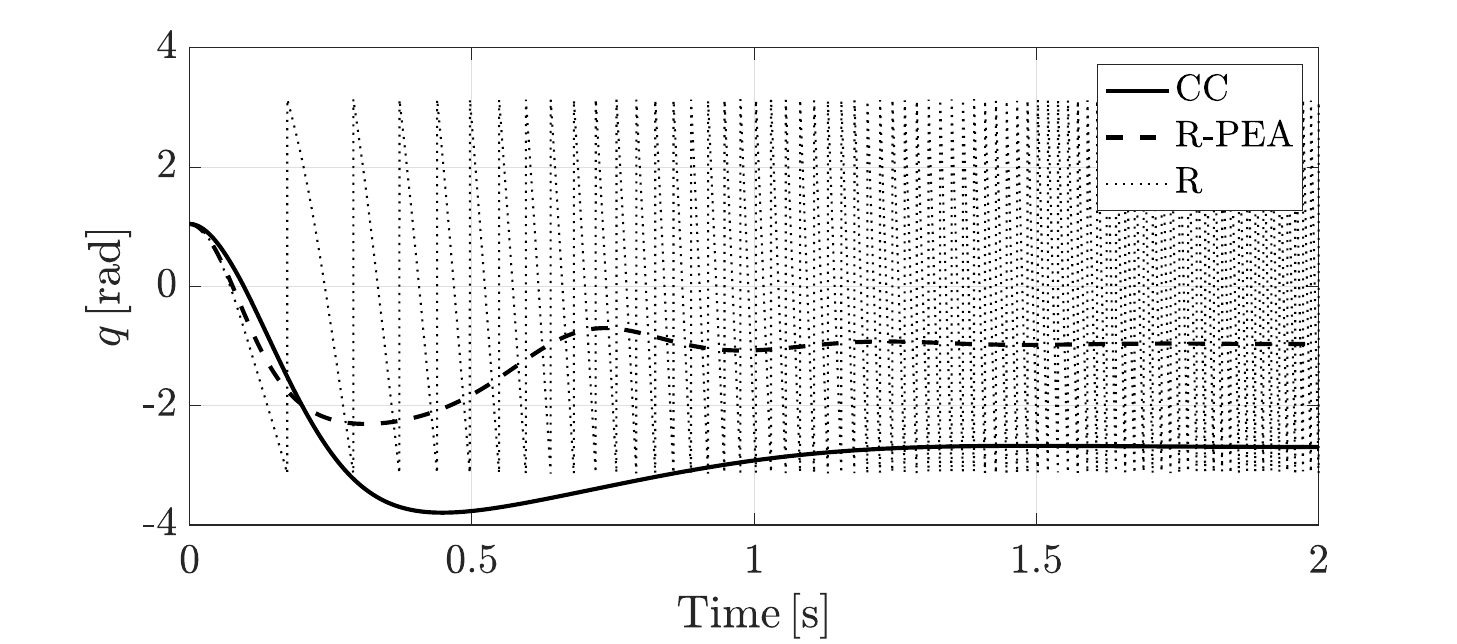}
	\caption{\small Examples of evolution of a constant curvature segment (CC), its lumped rigid link approximation with (R-PEA) and without (R) parallel springs. The CC dynamics is described by \eqref{eq:M_cc}-\eqref{eq:A_cc}. The parameters considered here are $m = 0.5 \mathrm{Kg}$, $L = 0.25 \mathrm{m}$, $\int_0^1 k = 0.05 \mathrm{Nm}$, $\int_0^1 s d = 0.01 \mathrm{N m s}$, and $(q(0),\dot{q}(0)) = (\pi/3,0)$. From top to bottom, the three plots show the evolutions for $(\tau,\phi)$ equal to $(0\mathrm{Nm},0)$, $(-0.3\mathrm{Nm},0)$, and $(0\mathrm{Nm},-pi/2)$ respectively. In all the three cases, CC and R-PEA are qualitatively similar, and both different from R.
	\label{fig:cc_evolution}}
\end{figure}

\newpage
\clearpage

\clearpage
\section[Better than Rigid Robots]{Better than Rigid Robots: Exploiting Softness in Model-Based Control}
\label{sb:better_than_rigi}

A natural way of understanding physical intelligence generated by a soft body within a model based setting is to look at the impedance $K(q) + D(q)\dot{q}$ in \eqref{eq:dynamics} as a low level feedback action. One obvious advantage of implementing such an action physically rather than digitally is that in this way it does not require any additional sensors and actuators. Another important feature is that it acts simultaneously and in a decentralized manner along the whole infinite dimensional structure. 
In other words, $\partial K/ \partial q$ and $D$ are always full rank almost everywhere no matter the level of democratization. The number of independent directions in which standard control can be produced is instead limited, as concisely represented by the fact that the number of columns $m$ of the matrix $A(q)$ in \eqref{eq:dynamics} is independent from the size $n$ of the configuration space $q$.
The consequence in terms of self-stabilization of the robot and control simplification are extensively discussed in the main body of the paper. 

Physical elasticity can also be used to better the execution of dynamic tasks. For example \cite{garabini2011optimality,haddadin2011optimal} prove that lumped join-spring-link systems admit optimal control actions that maximize velocity or forces beyond what can be achieved by a rigid robot of equivalent inertia. During these tasks the potential $U_{\mathrm{K}}(q)$ serves as a tank in which energy can be stored and released when necessary.
Thanks to multi-stabilities and bucklings, continuum structures can lead to even extremer behaviors concerning pick performances, which have be thoroughly investigated in a model based fashion \cite{armanini2017elastica,arakawa2021snap,vasios2021universally}.
However, the proposed descriptions are generally not in the language of dynamical systems, and as such using them for control purposes is still an open challenge.
Another benefit of physical elasticity is to endow the robot with the capability of performing regular oscillations \cite{albu2020review}, which can be excited by means of model based control \cite{garofalo2018passive,marcucci2020parametric,della2020exciting}.
This is especially useful in efficient and robust locomotion \cite{calisti2017fundamentals,kashiri2018overview,bujard2021resonant}.
It is worth underlying that all these capabilities are implicitly exploited by the vast range of approaches using numerical optimization for controlling soft robots \cite{best2016new,marchese2016dynamics,duriez2017soft,bern2019trajectory,hyatt2020real,tonkens2020soft}. 

For an in depth analysis on how the body of a soft robot can generate intelligent behaviors we refer to another paper within the same special issue discussing embodied intelligence \cite{helmut}.

\newpage
\clearpage

\clearpage
\section[Model-based Perception]{Sidebar: Model-based Perception of Shape and Forces}
\label{sb:mb_perception}

Despite the many advances in designing and fabricating soft sensors \cite{wang2018toward}, the perception problem remains an open one in soft robotics.
The use of models can help connecting a finite number of sensor measurements to the virtually infinite amount of degrees of freedom.
Yet, for most of existing models, there exist no sensor capable of directly measuring the configuration space $q$. At the best, a nonlinear combination of the state variables $h(q)$ can be measured, where $h$ is the forward kinematics of the sensor location. This is the dual to the collocation problem that we have encountered in control. Indeed, retrieving a configuration $q$ compatible with the measurements $\bar{x} = h(q)$ is formally equivalent to the task space regulation \eqref{eq:task_goal}, and as such it can be solved by using \eqref{eq:kine_control_fa}.
Alternative kinematic inversion solutions can also be used. For example, constant curvature models admit closed form inverse kinematics \cite{neppalli2009closed}. Nonlinear constrained optimization is used in \cite{hyatt2018configuration} for soft robot with lumped joints.
The knowledge of the robot dynamics can also be taken into account when using nonlinear observers, as the Extended Kalman filter \cite{lunni2018shape,loo2019h}.


The persistence of the potential field $K(q) + G(q)$ allows to connect forces and configurations - especially at steady state - as described by \eqref{eq:equilibrium}. The static inversion of a rigid-link approximations of a soft rod is used in \cite{takano2017real} to extract posture information from a six-axis force/torque sensor place place at the base.
Yet, this relationship is most often used in the other direction: from posture measurements to force sensing. 
Static models can be used to regress an equivalent wrench applied at the end effector from posture information \cite{rucker2011deflection,campisano2020online} under the hypothesis that the robot is lightweight. 
Disturbance observers can be used to detect interactions when the robot mass is not negligible \cite{della2020data}. The location and intensity of the external force can be simultaneously regressed when enough information on the current shape of the robot is available. This is achieved in \cite{bajo2011kinematics} by using a piecewise constant curvature model, and in \cite{chen2020modal} through modal expansion. Numerical inversion of a static Cosserat model can provide an estimation of the whole force distribution through functional expansion of the force profile \cite{aloi2019estimating}.
Static FEM models inversion is used in \cite{navarro2020model} to detect and characterize contacts integrating capacitive and pneumatic sensing. The method can also be applied to soft surfaces. 
Tip forces and robot's shape can also be estimated simultaneously integrating measurements of the base load with a static Cosserat model \cite{sadati2020stiffness}. 

\newpage
\clearpage

\clearpage
\section[Robust control]{Sidebar: Robust control}
\label{sb:Robust}

Models for soft robots always come with some degree of uncertainty.
The nature of the materials used and the manufacturing process are such that the mechanical characteristics of a soft systems can vary dramatically even when starting from a similar original design.
Moreover the state discretization that is at the base of all the discussed control strategies implies that part of the dynamics is ignored.
Both sources of uncertainty will most probably mitigated by advancements in material science and modeling. Yet, these improvements will hardly be sufficient to completely eliminate the issue, which must be taken into account while devising model based strategies. One way of achieving this goal is to integrate learning loops into the controller. Alternatively, control loops can be devised in such a way that they are intrinsically robust to uncertainties.
This is often done implicitly in soft robotics, by avoiding to strongly rely on feedback model cancellations or on high gains. These are indeed characteristics shared by almost all the techniques discussed in this paper.
Alternatively, robustness to uncertainties can be implemented by explicitly relying on robust control design \cite{abdallah1991survey}.
For example, linear robust $H_{\infty}$ control is used in \cite{shu2018robust} 
and \cite{doroudchi2018decentralized} for controlling a single segment and a planar soft robot respectively.
Interval arithmetics is used in \cite{hisch2017robust} to design a nonlinear model based controller which can achieve prescribed tracking performance in presence uncertainties.
Robust sliding model control is considered in \cite{alqumsan2019robust,alqumsan2019multi,cao2021model}.
Fractional order control is used \cite{deutschmann2017robust,munoz2020iso}. The latter is discussed in detail by the survey paper \cite{concha}, part of the same special issue.

\newpage
\clearpage

\clearpage
\section[Infinite Dimensional Control]{Sidebar: Infinite Dimensional Control}
\label{sb:infinite_dimensional}

The question of if infinite dimensional models should be directly used in the design of feedback controllers is a long lasting one, which extends much prior and far beyond the soft robotics field \cite{baker2000finite,jones2010discretization}.
Indeed, working with a finite dimensional approximation of the dynamics substantially simplifies the control design, and already allows to take into account some important features of the robot's dynamics to any desired level of precision.
From a practical standpoint having results which can be proven for any level of discretization ($1 << n_{\mathrm{S}} < \infty$) is de facto equivalent to dealing with the continuum case ($n_{\mathrm{S}} \rightarrow \infty$). Even if appealing, it must be stressed that this approach is not mathematically accurate since it disregards important issues connected to convergence and well-definiteness.
Also, for this line of reasoning to hold the level of discretization of the controller must be kept constant while increasing the discretization of the model. As a simple example, consider the feedforward controller \eqref{eq:feedforward_dynamic}. Different levels of discretization in general imply that the feedforward action does not perfectly match the exact one, i.e. $\tau = \hat{G}(\bar{q}) + \hat{K}(\bar{q})$, with $||\hat{G} - G|| + ||\hat{K} - K|| < \delta$ for some $0 < \delta < \infty$. 
As a result a different equilibrium $\hat{\bar{q}}$ is attained, which is close to ${\bar{q}}$ if $\delta$ is small enough compared to the lipschitz constants of $A$, $K$, and $G$. The robot's configuration converges locally to $\hat{\bar{q}}$ if a version of \eqref{eq:stablity_condition_ff} centered around the new equilibrium is verified.
This analysis becomes more and more complex as soon as non trivial feedback actions are involved \cite{thieffry2018reduced,della2019control}.

On the other hand, even if it requires an arguably substantially more complex formalism, all together avoiding state space discretizations can have two major benefits.
First, it is the only way to exclude that the controller will generate control spillover \cite{balas1978modal,balas1978feedback}. This is a degradation of performance that can eventually bring to instability, due to excitation of high order and otherwise stable dynamics operated by controllers designed using finite dimensional approximations.
Second, infinite dimensional analysis can result in more compact and interpretable solutions compared to the ones based on high dimensional ODEs.
However, the classic theory of PDE control have been mostly focused on linear systems \cite{datko1970extending,pazy2012semigroups,apkarian2018structured}, with extensions to the fully nonlinear case being a topic that is currently being actively researched \cite{mironchenko2020input,rashad2020twenty}.
Consequently, the large majority of applications of PDE control to continuum mechanics \cite[Secs. 4,5]{luo2012stability} deal with systems that for our goals could be consider as a small displacement approximation of the nonlinear rod dynamics \cite{timoshenko1983history}: Euler-Bernoulli and Timoshenko–Ehrenfest beams. These theories study continuum elements undergoing small planar deformations as a result of an external load. Under these assumption, their configuration is described as displacement from a neutral configuration and their dynamics is described by linear PDEs. 
The suppression of vibrations in an Euler-Bernoulli beam subject to boundary actuation can be achieved using local linear feedback \cite{west1987euler,morgul1992dynamic}.
This strategy can be extended to simultaneously verifying constraints in the output by means of barrier Lyapunov function theory \cite{he2015vibration}, and to deal with disturbances and input constrains by using adaptive iterative learning control \cite{he2017adaptive}.
Similarly, linear damping injection can be used to absorb vibrations in a Timoshenko beam subject to boundary \cite{kim1987boundary} or  point-wise \cite{xu2003stabilization} actuation. Damping injection can be combined with energy shaping for configuration control \cite{macchelli2004modeling}. The contact force regulation of a Timoshenko actuated at the base is discussed in \cite{endo2016contact}, under the hypothesis that the environment provides dissipative damping forces

Even if the vast majority of works deal with linear beam models, attention has been also devoted to nonlinear cases.
In \cite{nishida2013optimality} a numerical approximation of an optimal passivity based control is used to stabilize an Euler-Bernulli beam undergoing deformations comparable to the ones of a soft robot.
A practically stable boundary regulator for a nonlinear Timoshenko beam with large deformations is proposed in \cite{do2018stabilization}. 
A boundary feedback control have been proposed in \cite{rodriguez2020boundary} for a beam undergoing large deflections and rotations and small strains, by relying on the fact this system can be mapped to a one-dimensional first-order semilinear hyperbolic system. 
Moving a further step towards the soft robot case we can find works dealing with Kirchhoff rods: \cite{majumdar2013static} discusses the open loop stability of some configurations, \cite{kratchman2016guiding} proposes a purely experimental validation of a kinematic controller, and \cite{bretl2014quasi} proposes a quasi-static manipulation strategy for soft objects by proving that the set of equilibria corresponding to changes in boundary position and orientation constraints is a smooth manifold parametrizable with a single chart.
Finally \cite{chang2020controlling,chang2020energy} use energy shaping and damping injection for posture regulation of soft robots modeled through Cosserat theory and with infinite dimensional input space. Convergence is discussed under finite element approximation. 

\newpage
\clearpage

\newpage
\clearpage

\sidebars 

\section{Authors Biography}

Cosimo Della Santina is Assistant Professor at TU Delft and Research Scientist at the German Aerospace Institute (DLR). He received his Ph.D. in robotics (cum laude, 2019) from the University of Pisa.
He was a visiting Ph.D. student and a postdoc (2017 to 2019) at the Computer Science and Artificial Intelligence Laboratory (CSAIL), Massachusetts Institute of Technology (MIT). He was also a postdoc (2020) at the Department of Mathematics and Informatics, Technical University of Munich (TUM). He is now a guest lecturer at the same university. Cosimo has been awarded euRobotics Georges Giralt Ph.D. Award (2020), and the “Fabrizio Flacco” Young Author Award of the RAS Italian chapter (2019). He also has been a finalist of the European Embedded Control Institute Ph.D. award (2020). His research interests include model based control of soft robots and other elastic systems, combining machine learning and model based strategies with application to mechanical systems, grasping and manipulation.

Christian Duriez received an engineering degree from the Institut Catholique d’Arts et Métiers of Lille, France, and a Ph.D. degree in robotics from the University of Evry, France. His thesis was realized at CEA/Robotics and Interactive Systems Technologies, followed by a postdoctoral position at the CIMIT SimGroup in Boston. He arrived at INRIA in 2006 in the ALCOVE team to work on the interactive simulation of deformable objects and haptic rendering. In 2009, He was the vice-head of the SHACRA team and focused on medical simulation. He is now the head of DEFROST team, created in January 2015. His research topics are Soft Robot models and control, Fast Finite Element Methods, simulation of contact response, and other complex mechanical interactions. All his research results are developed in SOFA, a framework that he co-develops with other INRIA teams. He was also one of the founders of the start-up company InSimo.

Daniela Rus is the Andrew (1956) and Erna Viterbi Professor of Electrical Engineering and Computer Science; Director of the Computer Science and Artificial Intelligence Laboratory (CSAIL); and Deputy Dean of Research for Schwarzman College of Computing at MIT. Rus’ research interests are in robotics, artificial intelligence, and data science. Rus serves as Director of the Toyota-CSAIL Joint Research Center. She is a MITRE senior visiting fellow, serves as a USA expert member for GPAI (Global Partnerships in AI), a member of the board of advisers for Scientific American, a member of the Defense Innovation Board, and a member of several other boards of technology companies. Rus is a Class of 2002 MacArthur Fellow, a fellow of ACM, AAAI, and IEEE, and a member of the National Academy of Engineering and the American Academy of Arts and Sciences. She is the recipient of the 2017 Engelberger Robotics Award from the Robotics Industries Association. She earned her Ph.D. in Computer Science from Cornell University.

\bibliographystyle{IEEEtran}
\bibliography{references}

\end{document}